\def\slash{\relax\ifmmode\delimiter"502F30E\mathopen{}\else\@old@slash\fi}
\tikzstyle{every picture}=[
\def\sc{\mathrm{sc}}
\begin{document}

\title{State complexity of multiple catenation}
\author{
    Pascal Caron
    \and Jean-Gabriel Luque
    \and Bruno Patrou
    \thanks{\{Pascal.Caron, Jean-Gabriel.Luque, Bruno.Patrou\}@univ-rouen.fr}
}
\institute{Département d'Informatique, Université de Rouen,\\ Avenue de l'Université,\\ 76801 Saint-\'Etienne du Rouvray Cedex,\\ France}

\maketitle
\begin{abstract}
We improve some results relative to the state complexity of the multiple catenation described by Gao and Yu. 
In particular we nearly divide by $2$ the size of the alphabet needed for witnesses. We also give some refinements to the algebraic expression of the state complexity, which is especially complex with this operation. We obtain these results by using peculiar DFAs defined by Brzozowski. 
\end{abstract}

\section{Introduction}

State complexity is a very active research area. It aims to determine the maximal size of a minimal automaton recognizing a language belonging to a given class. State complexity can be studied from the deterministic as well as non-deterministic point of view. Here, we only consider the deterministic case. Then, the state complexity of a regular language is the states number of its minimal DFA (Deterministic Finite Automaton). And the state complexity of a regular operation allows to compute the maximal size of any DFA obtained by applying this operation over regular languages, knowing their respective state complexities. Such operations can be elementary (see, as one of the first reference in this domain, \cite{YZS94}) or the result of some combinations (see, for example, \cite{GSY08}, \cite{CGKY11} or \cite{JO11}). Sometimes, the computation of state complexities needs to use heavy tools of combinatorial, as in \cite{CLMP15}. To have an expanded view of the domain, it is useful to refer to the surveys \cite{GMRY12} and \cite{GMRY15}.

In \cite{YZS94}, the authors are the first ones to study the state complexity of catenation. They prove $m2^n-2^{n-1}$ to be the upper bound for the states number of a minimal DFA recognizing the catenation of two regular languages with respective state complexities $m$ and $n$. And they propose a $3$-letters witness reaching the bound. In \cite{Jir05}, G. Jiraskova produces a $2$-letters witness. In \cite{GY09}, the authors study a generalization by considering the sequential catenation of an arbitrary number $\alpha$ of regular languages. The upper bound they find is very intricate to write, its algebraic representation being growing with $\alpha$. The witnesses they describe are defined over $(2\alpha-1)$-letters alphabets. In \cite{Brz13}, J. Brzozowski shows that a particular family of DFAs can be used to produce witnesses in a very large number of cases.

In this paper, we focus on sequential catenation of $\alpha$ DFAs and our contributions are the following: first, we give a recursive definition of the state complexity which can be easily computed. Then, as our main result, we improve the set of witnesses by dramatically reducing the size of the alphabet from $2\alpha-1$ to $\alpha+1$. For this, we use DFAs issued from the Brzozowski family. Last, we conjecture it is possible to decrease the size of the alphabet until $\alpha$ (which should be optimal) still using Brzozowski DFAs. We test computationally our conjecture until $6$ or $7$ DFAs, and prove it when $\alpha=2$ (giving here a positive issue to a remark made by Brzozowski who thought its family was deficient in this peculiar case) and $\alpha=3$.

In section $2$ are recalled the classical tools we need both in automata theory and in algebraic combinatorics. Section $3$ is devoted to the presentation of the construction used for multiple catenation and to compute the upper bound for the state complexity of this construction. In section $4$, we describe a family of $\alpha$ DFAs over an $(\alpha+1)$-letters alphabet and prove it to be a witness for the catenation of $\alpha$ regular languages. For the same operation, we give, in section $5$, witnesses over $\alpha$-letters alphabet when $\alpha=2$ and $\alpha=3$ and we conjecture these witnesses can be extended for any value of $\alpha$.
\section{Preliminaries}

In all this paper, $\Sigma$ 
denotes a finite alphabet. The set of all finite words over $\Sigma$ is denoted by $\Sigma ^*$.  The empty word is denoted by  $\varepsilon$. A language is a subset of $\Sigma^*$. The set of subsets of a finite set $A$ is denoted by $2^A$ and $\#A$ denotes the cardinality of $A$.  
In the following, by abuse of notation, we 
often write $q$ for any singleton $\{q\}$.

A  finite automaton (FA) is a $5$-tuple $A=(\Sigma,Q,I,F,\cdot)$ where $\Sigma$ is the input alphabet, $Q$ is a finite set of states, $I\subset Q$ is the set of initial states, $F\subset Q$ is the set of final states and $\cdot$ is the transition function from  $Q\times \Sigma$ to $2^Q$. A FA is deterministic (DFA) if $\#I=1$ and for all $q\in Q$, for all $a\in \Sigma$, $\#(q\cdot a)\leq 1$. 
 Let $a$ be a symbol of $\Sigma$. Let $w$ be a word of $\Sigma^*$. The transition function is  extended to any word  by $q\cdot a w=\bigcup _{q'\in q\cdot a} q'\cdot w$ and  $q\cdot \varepsilon=q$. 
  
A symmetric use of the dot notation leads  to the following definition.  Let $w\cdot q=\{q'\mid q\in q'\cdot w\}$.  We  extend the dot notation to any set of states $S$ by $S\cdot w=\bigcup_{s\in S}s\cdot w$ and $w\cdot S =\bigcup_{s\in S} w\cdot s$. 
 A word $w\in \Sigma ^*$ labels a successful path in a FA $A$ if $I\cdot w\cap F\neq \emptyset$. 

 In this paper, we assume that all FA  are complete which means that for all $q\in Q$, for all $a\in \Sigma$, $\#(q\cdot a)\geq 1$. A state $q$ is accessible in a FA  if there exists a word $w\in \Sigma ^*$ such that $q\in I\cdot w$.
The language recognized by a FA $A$ is the set of words labeling a successful path in $A$. 
Two automata are said to be equivalent if they recognize the same language.  

Let $D=(\Sigma,Q_D,i_D,F_D,\cdot)$ be a DFA.
Two states $q_1,q_2$ of  $D$ are equivalent if for any word $w$ of $\Sigma^*$, $q_1\cdot w\in F_D$ if and only if $q_2\cdot w\in F_D$. Such an equivalence is denoted by $q_1\sim q_2$. A DFA is  minimal if there does not exist any equivalent DFA  with less states and it is well known that for any DFA, there exists a unique minimal equivalent one \cite{HU79}. Such a minimal DFA  can be  obtained from $D$ by computing the accessible part of the automaton $D\slash \sim=(\Sigma,Q_D\slash \sim,[i_D],F_D\slash \sim,\cdot)$ where for any $q\in Q_D$, $[q]$ is the $\sim$-class of the state $q$ and for any $a\in \Sigma$, $[q]\cdot a=[q\cdot a]$. In a minimal DFA, any two distinct states are pairwise non-equivalent.

The states of a FA are often denoted with indexed symbols and arithmetic operations can be used to compute new index from given ones. Since this index allows to point to a state of the same FA, the operations are always done modulo the states number of the FA. This is recurrent in the paper and, in general, not explicitly mentioned.

 The state complexity of a regular language $L$ denoted by $\sc(L)$ is the number of states of its minimal DFA. 
  Let ${\cal L}_n$ be the set of languages of state complexity $n$. The state complexity of a unary operation $\otimes$ is the function $\sc_{\otimes}$ associating with an integer $n$ the maximum of the state complexities of $(\otimes L)$ for $L\in {\cal L}_n$.
  A language $L\in {\cal L}_n$ is a witness (for $\otimes$) if  $\sc(\otimes L)=\sc_{\otimes}(n)$.
  This can be generalized, and the state complexity of a $k$-ary operation $\otimes$ is the $k$-ary function which associates with any tuple $(n_1,\ldots,n_k)$ the integer $\mathrm{max}\{\sc(\otimes(L_1,\ldots,L_k))|L_i\in\mathcal{L}_{n_i},\forall i\in[1,k]\}$. Then, a witness is a tuple $(L_{1},\ldots,L_{k})\in({\cal L}_{n_1}\times \cdots  \times{\cal L}_{n_k})$ such that $\sc(\otimes(L_{1},\ldots,L_{k}))=\sc_{\otimes}(n_1,\ldots,n_k)$. 
  An important research area consists in finding witnesses for any $(n_1,\ldots ,n_k)\in \mathbb{N}^k$.
 

   For example, let us consider the ternary operation $\otimes$ defined for any three languages $L_1, L_2, L_3$ by $\otimes(L_1,L_2,L_3)=L_1\cdot(L_2\cdot L_3)$ and let $h$ be its state complexity. 
  Let $f$ be the  state complexity of $\cdot$. For any three integers $n_1,n_2,n_3$, it holds $h(n_1,n_2,n_3)\leq f(f(n_1,n_2),n_3))$ \cite{GY09}. In fact, applying the catenation on a witness does not produce a good candidate for a witness. 

%
%


 
In \cite{Brz13}, Brzozowski defines  a family of languages  that turns to be universal witnesses for several operations. The automata denoting these languages are called \textit{Brzozowski automata}.
We  need some background to define these automata. We 
follow the terminology of \cite{GM08}. Let $Q=\{0,\ldots, n-1\}$ be a set. A \textit{transformation} of the set $Q$ is a mapping of $Q$ into itself. If $t$ is a transformation and $i$ an element of $Q$, we denote by $it$ the image of $i$ under $t$. A transformation of $Q$ can be represented by $t=[i_0, i_1, \ldots i_{n-1}]$ which means that $i_k=kt$ for each $0\leq k\leq n-1$ and $i_k\in Q$. A \textit{permutation} is a bijective transformation on $Q$. The \textit{identity} permutation of $Q$ is denoted by $\mathds{1}$. A \textit{cycle} of length $\ell\leq n$  is a permutation $c$, denoted   by $(i_0,i_1,\ldots i_{\ell-1})$, on a subset $I=\{i_0,\ldots ,i_{\ell-1}\}$ of $Q$  where  $i_kc=i_{k+1}$ for $0\leq k<\ell-1$ and $i_{\ell-1}c=i_0$.  A \emph{$k$-rotation}  is obtained by composing $k$ times the same cycle. In other word, we construct a $k$-rotation $r_k$ from the cycle $(i_0,\dots,i_{\ell-1})$ by setting  $i_{j}r_k=i_{j+k \mod \ell}$ for $0\leq j\leq \ell-1$. 
A \textit{transposition} $t=(i,j)$ is a permutation on $Q$ where $it=j$ and $jt=i$ and for every  elements $k\in Q\setminus \{i,j\}$, $kt=k$.  A \textit{contraction}  $t=\left(\begin{array}{r}i\\j\end{array}\right)$ is a transformation where  $it=j$ and  for every  elements $k\in Q\setminus \{i\}$, $kt=k$.
Then,  a Brzozowski automaton is a complete  DFA $(\Sigma, Q=\{0,\ldots , n-1\}, 0, F=\{q_f=n-1\}, \cdot)$, where any letter of $\Sigma$ induces one of the transformation among transposition, cycle over $Q$, contraction and identity.
  Let $a,b,c,d$ be distinct  symbols of $\Sigma$. As an example of Brzozowski automata (see Figure \ref{Brzo}), let \label{Brzo-def} $W_n(a,b,c,d)=(\Sigma,Q_n,0,\{q_f\},\cdot)$ where $Q_n=\{0,1,\ldots ,n-1\}$, the symbol $a$ acts as the cycle  $(0,1,\ldots, n-1)$,  $b$ acts as the transposition $(0, 1)$,  $c$ acts as the contraction $\left(\begin{array}{r}1\\0\end{array}\right)$ and   $d$ acts as $\mathds{1}$.

 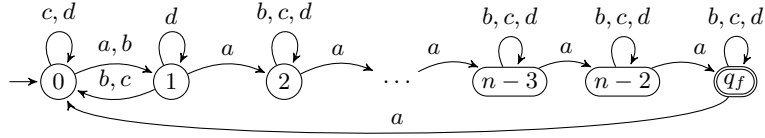
\begin{figure}[htb]
	\centerline{
		\begin{tikzpicture}[node distance=1.5cm, bend angle=25]
			\node[state,initial] (0)  {$0$};
			\node[state] (1) [right of=0] {$1$};
			\node[state] (2) [right of=1] {$2$};
			\node (etc1) [right of=2] {$\ldots$};
			\node[state, rounded rectangle] (m-3) [right of=etc1] {$n-3$};
			\node[state, rounded rectangle] (m-2) [right of=m-3] {$n-2$};
			\node[state,accepting, rounded rectangle] (m-1) [right of=m-2] {$q_f$};
			\path[->]
        (0) edge[bend left] node {$a,b$} (1)
        (1) edge[bend left] node {$a$} (2)
        (2) edge[bend left] node {$a$} (etc1)
        (etc1) edge[bend left] node {$a$} (m-3)
        (m-3) edge[bend left] node {$a$} (m-2)
        (m-2) edge[bend left] node {$a$} (m-1)
        (m-1) edge[out=-115, in=-65, looseness=.2] node[above] {$a$} (0)
		    (0) edge[out=115,in=65,loop] node {$c, d$} (0)
		    (1) edge[out=115,in=65,loop] node {$d$} (1)
		    (2) edge[out=115,in=65,loop] node {$b, c, d$} (2)
		    (m-3) edge[out=115,in=65,loop] node {$b, c, d$} (m-3)
		    (m-2) edge[out=115,in=65,loop] node {$b, c, d$} (m-2)
		    (m-1) edge[out=115,in=65,loop] node {$b, c, d$} (m-1)
        (1) edge[bend left] node[above] {$b,c$} (0)
			;
\end{tikzpicture}
}
\caption{The automaton $W_n(a,b,c,d)$}\label{Brzo}
\end{figure}	

\section{A bound for the state complexity of the multiple catenation}

We first define a construction for the multiple catenation. We then compute  an upper bound for the number of states of the resulting automaton. 

\begin{definition}\label{subsetConstruction}
(Subset construction) Let $A=(\Sigma,Q_A,I_A,F_A,\cdot_A)$ be an NFA. The "subset construction" consists to build the following DFA: $B=(\Sigma,Q,I,F,\cdot)$ where
\begin{itemize}
\item $Q=2^{Q_A}$
\item $I=\{I_A\}$
\item $F=\{X\in Q|X\cap F_A\neq\emptyset\}$
\item $\forall (X,a)\in Q\times \Sigma$, $X\cdot a=\displaystyle{\bigcup_{p\in X}(p\cdot_A a)}$
\end{itemize}
\end{definition}

\begin{definition}\label{autoConcat}
Let $A=(\Sigma,Q_A,I_A,F_A,\cdot_A)$ and $B=(\Sigma,Q_B,I_B,F_B,\cdot_B)$ be two NFAs. We compute the  NFA $A\cdot B=(\Sigma,Q,I,F,\cdot)$  as follows:
\begin{itemize}
\item $Q=Q_A\cup Q_B$
\item $I=\left\{\begin{array}{ll}
 I_A&\mbox{ if }I_A\cap F_A=\emptyset\\
I_A\cup I_B&\mbox{ otherwise}
\end{array}\right.$
\item $F=F_B$
\item $p\cdot a=\left\{
\begin{array}{ll}
p\cdot_B a&\mbox{ if }p\in Q_B\\
p\cdot_A a&\mbox{ if }p\cdot _A a \cap F_A=\emptyset \wedge p\in Q_A\\
p\cdot_A a\cup I_B&\mbox{ otherwise} 
\end{array}
\right.$
\end{itemize}
\end{definition}


\begin{lemma}
$L(A\cdot B)=L(A)\cdot L(B)$
\end{lemma}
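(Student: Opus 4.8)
The plan is to establish the two inclusions $L(A\cdot B)\subseteq L(A)\cdot L(B)$ and $L(A)\cdot L(B)\subseteq L(A\cdot B)$ separately, the whole argument resting on one structural observation about the runs of $A\cdot B$. By Definition~\ref{autoConcat} every transition leaving a state of $Q_B$ stays inside $Q_B$, while a transition leaving a state $p\in Q_A$ on a letter $a$ equals $p\cdot_A a$ whenever $p\cdot_A a\cap F_A=\emptyset$ and equals $p\cdot_A a\cup I_B$ otherwise. Consequently, along any path of $A\cdot B$, once a state of $Q_B$ is reached the path never returns to $Q_A$, and the only way to pass from $Q_A$ to $Q_B$ is through a transition whose source $p\in Q_A$ and letter $a$ satisfy $p\cdot_A a\cap F_A\neq\emptyset$, landing in a state of $I_B$. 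I would first record these facts, together with the remark that $p\cdot a\supseteq p\cdot_A a$ for every $p\in Q_A$, so that every path of $A$ confined to $Q_A$ is also a path of $A\cdot B$.

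For the inclusion $L(A)\cdot L(B)\subseteq L(A\cdot B)$, take $w=uv$ with $u\in L(A)$ and $v\in L(B)$, and distinguish whether $u=\varepsilon$. If $u=\varepsilon$ then $I_A\cap F_A\neq\emptyset$, hence $I=I_A\cup I_B$ and in particular $I_B\subseteq I$; a successful path of $B$ reading $v$ starts in $I_B$ and, the $B$-transitions being preserved, is already a successful path of $A\cdot B$ reading $w=v$. If $u\neq\varepsilon$, write $u=u'a$; a successful path of $A$ reading $u$ leads from $I_A$ to some $f\in F_A$, its last transition going from a state $p$ into $f\in p\cdot_A a\cap F_A$. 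The prefix reading $u'$ stays in $Q_A$ and is a path of $A\cdot B$ by the remark above, while the final transition satisfies $p\cdot a=p\cdot_A a\cup I_B$, allowing the path to enter $I_B$ after reading $a$; continuing with a successful $B$-path reading $v$ yields a successful path of $A\cdot B$ reading $uv=w$.

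For the converse inclusion, consider a successful path $q_0,\dots,q_\ell$ of $A\cdot B$ reading $w=w_1\cdots w_\ell$, so $q_0\in I$ and $q_\ell\in F_B\subseteq Q_B$. Let $k$ be the least index with $q_k\in Q_B$; it exists because $q_\ell\in Q_B$. If $k=0$ then $q_0\in I_B$, which forces $I=I_A\cup I_B$ and hence $\varepsilon\in L(A)$; the entire path lies in $Q_B$ and reads $w\in L(B)$, so $w=\varepsilon\cdot w\in L(A)\cdot L(B)$. If $k\geq 1$ then $q_{k-1}\in Q_A$, and by the structural observation the transition $q_k\in q_{k-1}\cdot w_k$ forces $q_k\in I_B$ and $q_{k-1}\cdot_A w_k\cap F_A\neq\emptyset$. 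The prefix $q_0,\dots,q_{k-1}$ stays in $Q_A$ by minimality of $k$ and uses only $A$-transitions, so $u:=w_1\cdots w_k$ drives $A$ from $I_A$ into $F_A$, giving $u\in L(A)$; the suffix $q_k,\dots,q_\ell$ lies in $Q_B$, reads $v:=w_{k+1}\cdots w_\ell$ from $I_B$ to $F_B$, giving $v\in L(B)$, whence $w=uv\in L(A)\cdot L(B)$.

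The routine parts are the bookkeeping of which clause of the transition function applies; the genuinely delicate point, and the step I would treat most carefully, is the handling of the initial set $I$ in the boundary case $\varepsilon\in L(A)$, where the run may start directly inside $Q_B$ and must be recognised as contributing the empty prefix to the factorisation $w=uv$.
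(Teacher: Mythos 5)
Your proof is correct and follows essentially the same route as the paper: the forward inclusion splits on $u=\varepsilon$ versus $u=u'a$ and uses the transitions added into $I_B$, while the converse decomposes an accepting run at the first point where it enters $Q_B$, exploiting the fact that no transition leads from $B$ back to $A$. If anything, your explicit minimal-crossing-index argument is slightly more careful than the paper's case split, which tacitly omits the case where $I_B\subseteq I$ but the accepting run nevertheless starts in $I_A$.
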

\begin{proof}
\begin{itemize}
\item Let $(u,v)\in L(A)\times L(B)$. If $u=\varepsilon$, $I_A\cap F_A\neq\emptyset$ and then $I_B\subset I$. We have $I_B\cdot_B v\cap F_B\neq \emptyset$, hence $I\cdot v\cap F\neq \emptyset$.
  We deduce $uv=v\in L(A\cdot B)$.
If $u=u'a$ then there exists $p\in I_A\cdot_A u'$ such that $p\cdot_A a\cap F_A\neq \emptyset$. So we have  $I_B\subset p\cdot a$. 		
Moreover,  $I_B\cdot_B v\cap F_B\neq \emptyset$. 
  It follows $I_A\cdot u'\cdot a\cdot v\cap F\neq \emptyset$ and then $uv\in L(A\cdot B)$.

  \item Let $w\in L(A\cdot B)$. From the definition of $I$, we deduce that $I_B\subset I$ if and only if  $\varepsilon\in L(A)$. 
  If $I_B\subset I$ and $I_B\cdot w\cap F\neq \emptyset$   then, since the algorithm does not add any  transition from  automaton $B$ to  automaton $A$, we have $I_B\cdot_B w\cap F_B\neq \emptyset$.
   So, we have $w=uv$ with
  $u=\varepsilon\in L(A)$ and $v=w\in L(B)$.
  
  If $I_B\cap I=\emptyset$, we have $I_A\cdot w \cap  F\neq \emptyset$. From the definition of $\cdot$, we deduce that there
  exists $u',v\in \Sigma ^*$, $a\in \Sigma$, $p\in Q_A$ such that 
  $p\in I_A\cdot_A u'$ and $p\cdot_A a\cap F_A\neq \emptyset$ hence  $u=u'a\in L(A)$. Furthermore, $I_B\subset p\cdot a$ and $I_B\cdot _B v\cap F_B\neq \emptyset$ imply  $v\in L(B)$.
\end{itemize}
\end{proof} 

Let us consider a sequence of complete DFAs $A_1=(\Sigma,Q_1,\{i_1\},F_1,\cdot_1)$, \ldots , $A_\alpha=(\Sigma,Q_\alpha,\{i_\alpha\},F_\alpha,\cdot_\alpha)$ we want to concatenate. We denote $A_{\overline{\alpha}}=(\Sigma,Q_{\overline{\alpha}},I_{\overline{\alpha}},F_{\overline{\alpha}},\cdot_{\overline{\alpha}})$ the NFA defined by $(\cdots((A_1\cdot A_2)\cdot A_3)\cdots)\cdot A_\alpha$. From previous lemma, we know that $L(A_1)\cdot L(A_2)\cdots L(A_\alpha)=L(A_{\overline{\alpha}})$. 

%
%
%

Each state of the DFA $A_{\widehat{\alpha}}=(\Sigma,Q_{\widehat{\alpha}},I_{\widehat{\alpha}},F_{\widehat{\alpha}},\cdot_{\widehat{\alpha}})$ obtained by applying the subset construction to the NFA  $A_{\overline{\alpha}}$ can be partitioned and seen as a sequence of the form $(S_1,S_2,\ldots,S_{\alpha})$ where each $S_j$ is a subset of $Q_j$. It is easy to see that if such a sequence corresponds to an accessible state of $A_{\widehat{\alpha}}$, then it verifies the three following properties:
\begin{itemize}[leftmargin=1cm]
\item[\textit{P1}.] $S_1$ is a singleton (often assimilated to its unique element),
\item[\textit{P2}.] $\forall\ 0<k<\alpha,\ (S_k=\emptyset\Rightarrow S_{k+1}=\emptyset)$ (to access a DFA, we must go through its 
  predecessors),
\item[\textit{P3}.] $\forall\ 0<k<\alpha,\ (S_k\cap F_k\neq \emptyset\Rightarrow i_{k+1}\in S_{k+1})$ (because of the
  transitions built in Definition \ref{autoConcat}).
\end{itemize}

A sequence which verifies properties \textit{P1}, \textit{P2} and \textit{P3} is called a \textit{valid} sequence. A state associated to such a sequence is called a \textit{valid} state. We now evaluate an upper bound for the number of valid states, that is an upper bound for the state complexity of multiple catenation.

\subsection{Counting states}
Let us notice that the  maximum  number of valid states is reached when each DFA has only one final state. So the number of valid states is upper bounded by $\#(\mathcal T_{\alpha})$
where $\mathcal T_{\alpha}$ is the set of valid sequences $(S_1,\ldots , S_\alpha)$ denoting the states  of the catenation of
$\alpha$ such automata of size $n_{1},\dots,n_{\alpha}$.


For any $j\in \mathbb{N}\setminus \{0\}$, let $\mathcal T^{+}_{j}$ (resp $\mathcal T^{-}_{j}$) be the subset of   $\mathcal T_{j}$ constituted with the sequences of non empty sets $(S_1,\ldots, S_j)$ with $q_f\in S_j$ (resp. $q_f\not\in S_j$).
A fast examination of the elements of $\mathcal T_\alpha$ gives
\begin{lemma}\label{Sk-1}
The set $\mathcal T_\alpha$ is the disjoint union
\begin{equation}
\mathcal T_\alpha=\mathcal T^+_\alpha\uplus\biguplus_{j=1}^{\alpha}\{(S_1,\dots,S_{j},\emptyset,\dots,\emptyset):(S_{1},\dots,S_{j})\in \mathcal T^{-}_{j}\}.
\end{equation}
\end{lemma}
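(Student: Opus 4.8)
The plan is to classify each valid sequence by the position at which its sets become empty, exploiting the fact that property \textit{P2} makes emptiness propagate to the right while property \textit{P1} guarantees that $S_1$ is a singleton, hence never empty. First I would observe that for any valid sequence $(S_1,\dots,S_\alpha)$ there is a well-defined index $j\in\{1,\dots,\alpha\}$ such that $S_1,\dots,S_j$ are all non-empty while $S_{j+1}=\dots=S_\alpha=\emptyset$: indeed $S_1\neq\emptyset$ forces $j\geq 1$, and \textit{P2} forbids a non-empty set from following an empty one, so the empty sets, if any, form a terminal block. This index $j$ is exactly the one indexing the pieces of the right-hand side.

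The key step is to read off the $q_f$ condition at the break point from property \textit{P3}. When $j<\alpha$, the combination $S_j\neq\emptyset$, $S_{j+1}=\emptyset$ together with \textit{P3} at index $j$ (if $S_j\cap F_j\neq\emptyset$ then $i_{j+1}\in S_{j+1}$) forces $S_j\cap F_j=\emptyset$, that is $q_f\notin S_j$ since $q_f$ is the only final state. Hence the truncation $(S_1,\dots,S_j)$ consists of non-empty sets with $q_f\notin S_j$, still satisfies \textit{P1} and the relevant instances of \textit{P3}, and therefore lies in $\mathcal T^{-}_j$. When $j=\alpha$ no set is empty, and I would simply split on whether $q_f\in S_\alpha$: this places the sequence either in $\mathcal T^{+}_\alpha$ or in $\mathcal T^{-}_\alpha$, the latter being precisely the $j=\alpha$ piece whose appended empty block is vacuous. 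This yields the forward inclusion into the right-hand side.

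For the reverse inclusion I would check that appending a block of empty sets to any $(S_1,\dots,S_j)\in\mathcal T^{-}_j$ produces a valid sequence of length $\alpha$: \textit{P1} and \textit{P2} are immediate, and the only instance of \textit{P3} needing verification, namely at index $j$, holds vacuously because $q_f\notin S_j$ gives $S_j\cap F_j=\emptyset$, while all later instances are vacuous since those sets are empty. Membership of $\mathcal T^{+}_\alpha$ in $\mathcal T_\alpha$ is clear. Disjointness is then automatic: the index $j$ is recovered as the number of leading non-empty sets, so distinct pieces cannot overlap, and $\mathcal T^{+}_\alpha$ is separated from the $j=\alpha$ piece by the condition $q_f\in S_\alpha$ versus $q_f\notin S_\alpha$.

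I do not expect a serious obstacle, as the statement is essentially a bookkeeping consequence of \textit{P1}--\textit{P3}; the only point demanding care is the boundary case $j=\alpha$, where the appended tail is empty and the corresponding piece must be identified with $\mathcal T^{-}_\alpha$. Getting the index conventions right there is what makes the union exhaust $\mathcal T_\alpha$ exactly once.
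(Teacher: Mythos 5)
Your proof is correct and takes exactly the route the paper leaves implicit (the paper dismisses this lemma with ``a fast examination of the elements of $\mathcal T_\alpha$''): classify each valid sequence by the length $j$ of its initial block of non-empty sets, use \textit{P3} at the break point to force $q_f\notin S_j$ when $j<\alpha$, split the $j=\alpha$ case according to $q_f\in S_\alpha$ or not, and check the reverse inclusion and disjointness. Nothing is missing; you have simply written out the bookkeeping the authors considered obvious.
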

Notice that for each $1<j\leq \alpha$, $\mathcal T^{-}_{j}$ splits into two disjoint sets
\begin{equation}\label{T-}
\mathcal T^{-}_{j}=\mathcal T^{--}_{j}\uplus \mathcal T^{+-}_{j}
\end{equation}
where
\begin{equation}\label{T--}
\mathcal T^{--}_{j}=\{(S_{1},\dots,S_{j})\in 
\mathcal T^{-}_{j}\mid  (S_{1},\dots,S_{j-1})\in \mathcal T^-_{j-1} \},
\end{equation}
and
\begin{equation}\label{T+-}
\mathcal T^{+-}_{j}=\{(S_{1},\dots,S_{j})\in \mathcal T^-_{j}\mid (S_{1},\dots,S_{j-1})\in 
\mathcal T^{+}_{j-1}\}.
\end{equation}
Also $\mathcal T^{+}_{j}$ splits into two disjoint sets
\begin{equation}
\mathcal T^{+}_{j}=\mathcal T^{-+}_{j}\uplus \mathcal T^{++}_{j}
\end{equation}
where
\begin{equation}\label{T-+}
\mathcal T^{-+}_{j}=\{(S_{1},\dots,S_{j})\in 
\mathcal T^{+}_{j}\mid  (S_{1},\dots,S_{j-1})\in \mathcal T^-_{j-1} \},
\end{equation}
and
\begin{equation}\label{T++}
\mathcal T^{++}_{j}=\{(S_{1},\dots,S_{j})\in \mathcal T^+_{j}\mid (S_{1},\dots,S_{j-1})\in 
\mathcal T^{+}_{j-1}\}.
\end{equation}
\begin{proposition}\label{Pcr}
	We have the following crossing recurrence 
\begin{equation}\label{crossingrec}
\left\{\begin{array}{l}
\#\mathcal T^{-}_{1}=n_{1}-1,\\ 
\#\mathcal T^{+}_{1}=1,\\
\#\mathcal T^{-}_{j}=(2^{n_{j}-1}-1)(\#\mathcal T^{-}_{j-1})+2^{n_{j}-2}(\#\mathcal T^{+}_{j-1})\\
\#\mathcal T^{+}_{j}=2^{n_{j}-1}(\#\mathcal T^{-}_{j-1})+2^{n_{j}-2}(\#\mathcal T^{+}_{j-1})=\#\mathcal T^{-}_{j}+\#\mathcal T^{-}_{j-1}.
\end{array}
\right.
\end{equation}
Furthermore, we have
\begin{equation}\label{sk-1}
	\#\mathcal T_{\alpha}=\sum_{j=1}^{\alpha}\#\mathcal T^{-}_{j}+\#\mathcal T^{+}_{j}.
\end{equation}
\end{proposition}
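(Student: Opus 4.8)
The plan is to treat (\ref{crossingrec}) by induction on $j$, reading each family as a set of valid prefixes extended by one more block, and to deduce (\ref{sk-1}) directly from Lemma \ref{Sk-1}. First I would settle the base cases: for $j=1$ the properties \textit{P2} and \textit{P3} are vacuous, so by \textit{P1} the sequences of $\mathcal{T}_1$ are exactly the singletons $\{s\}$ with $s\in Q_1$; the $n_1-1$ choices with $s\neq q_f$ give $\#\mathcal{T}_1^-=n_1-1$, while the only choice with $s=q_f$ gives $\#\mathcal{T}_1^+=1$.

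For the inductive step I would count, for a fixed valid prefix $(S_1,\dots,S_{j-1})$, the admissible last blocks $S_j$. The key remark is that passing from $(S_1,\dots,S_{j-1})$ to $(S_1,\dots,S_j)$ can only create violations of \textit{P2} or \textit{P3} at index $k=j-1$, every other instance of these properties bearing on the first $j-1$ blocks alone. Since every prefix under consideration lies in $\mathcal{T}_{j-1}^{-}$ or $\mathcal{T}_{j-1}^{+}$, its block $S_{j-1}$ is non-empty and \textit{P2} holds for free; as each $A_j$ has the unique final state $q_f$, the surviving constraint \textit{P3} reads $q_f\in S_{j-1}\Rightarrow i_j\in S_j$. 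Hence the set of legal $S_j$ depends only on two independent data: whether the prefix is a ``$+$'' (then $i_j$ is forced into $S_j$) or a ``$-$'', and whether the target is a ``$+$'' ($q_f\in S_j$) or a ``$-$'' ($q_f\notin S_j$). Using $i_j\neq q_f$, a count of subsets of $Q_j$ then gives $2^{n_j-1}-1$ legal blocks for the $--$ case, $2^{n_j-2}$ for $+-$, $2^{n_j-1}$ for $-+$ and $2^{n_j-2}$ for $++$, non-emptiness being automatic throughout. Multiplying each multiplicity by the number of prefixes and summing over the decompositions (\ref{T--}), (\ref{T+-}), (\ref{T-+}) and (\ref{T++}) yields the two recursive lines of (\ref{crossingrec}); subtracting them gives the closed form $\#\mathcal{T}_j^{+}=\#\mathcal{T}_j^{-}+\#\mathcal{T}_{j-1}^{-}$.

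The one delicate point, and the step I expect to require the most care, is the independence of the two membership conditions on $S_j$, namely that $i_j\neq q_f$. This is what keeps the four multiplicities in the stated form (in particular it prevents $\mathcal{T}_j^{+-}$ from being empty, which would otherwise break the recurrence), and it is guaranteed by our standing hypothesis that each factor is a reduced DFA with a single final state distinct from its start state, as in the Brzozowski family. Finally, (\ref{sk-1}) needs no separate induction: taking cardinalities in the disjoint union of Lemma \ref{Sk-1}, the block $\mathcal{T}_\alpha^{+}$ contributes $\#\mathcal{T}_\alpha^{+}$ and each block indexed by $j$ contributes $\#\mathcal{T}_j^{-}$, which is precisely the announced total.
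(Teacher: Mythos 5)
Your proof is correct and takes essentially the same route as the paper's: the paper's one-line argument is exactly your count of admissible last blocks per prefix type, giving the four multiplicities $2^{n_j-1}-1$, $2^{n_j-2}$, $2^{n_j-1}$, $2^{n_j-2}$ for the splittings (\ref{T--})--(\ref{T++}), and it likewise reads (\ref{sk-1}) directly off Lemma \ref{Sk-1}. Your added care about the base case and the implicit hypothesis $i_j\neq q_f$ merely fills in details the paper leaves unstated (and your assignment of the counts to the $-+$ and $++$ cases is the correct one; the paper's displayed equations swap those two labels, though the summed recurrence is unaffected).
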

\begin{proof}
	It suffices to remark that formulas (\ref{T--}), (\ref{T+-}), (\ref{T-+}), (\ref{T++}) imply 
\begin{equation}{}
\#\mathcal T^{--}_{j}=(2^{n_{j}-1}-1)(\#\mathcal T^{-}_{j-1}),\ \ \ \#\mathcal T^{+-}_{j}=2^{n_{j}-2}(\#\mathcal T^{+}_{j-1}),
\end{equation}
{}
\begin{equation}
\#\mathcal T^{++}_{j}=2^{n_{j}-1}(\#\mathcal T^{-}_{j-1}),\ \ \ \#\mathcal T^{-+}_{j}=2^{n_{j}-2}(\#\mathcal T^{+}_{j-1}).
\end{equation}
Formula (\ref{sk-1}) comes immediately from Lemma \ref{Sk-1}. 
\end{proof}

\begin{example}\label{exampleJG}
	Applying proposition \ref{Pcr}, we find after simplification
	\[{}
	\#\mathcal T_{2}={2}^{{n_2}-1} \left( 2\,{ n_1}-1 \right) 
	\]
	\[{}
	\#\mathcal T_{3}={  n_1}-1+\frac38\,{2}^{{  n_3}+{  n_2}}(2n_{1}-1)+{2}^{{  n_3}}(1-  n_1).
	\]
	\[{}\begin{array}{rcl}
	\#\mathcal T_{4}&=&{2}^{{  n_4}}{  n_1}+{\frac {9}{16}}\,{2}^{{  n_4}+{  n_3}+{  n_2
}}{  n_1}- \frac34\,{2}^{{  n_4}+{  n_3}}{  n_1}+{2}^{{  n_2}-1}{  
n_1}-\frac14\,{2}^{{  n_2}}-{2}^{{  n_4}}-{\frac {9}{32}}\,{2}^{{  n_4}+
{  n_3}+{  n_2}}+\frac14\,{2}^{{  n_4}+{  n_1}}\\&&+\frac34\,{2}^{{  n_4}+{
  n_3}}-{2}^{{  n_4}-1+{  n_2}}{  n_1}.\end{array}
	\]
	\[{}
	\begin{array}{rcl}
	\#\mathcal T_{5}&=&-1+{ n_1}-\frac3{16}\,{2}^{{ n_3}+{ n_2}}+\frac38\,{2}^{{ n_3}+{ n_2}}
{ n_1}+{2}^{{ n_3}-1}-{2}^{{ n_5}-1+{ n_3}}+\frac3{16}\,{2}^{{ n_5
}+{ n_4}+{ n_2}}-{\frac {27}{128}}\,{2}^{{ n_5}+{ n_4}+{ n_3
}+{ n_2}}\\
&&+{\frac {9}{16}}\,{2}^{{ n_5}+{ n_4}+{ n_3}}+{2}^{{
 n_5}}-\frac34\,{2}^{{ n_5}+{ n_4}}+\frac3{16}\,{2}^{{ n_5}+{ n_3}+{
 n_2}}-\frac14\,{2}^{{ n_5}+{ n_2}}-{2}^{{ n_3}-1}{ n_1}+{2}^{{
 n_5}-1+{ n_2}}{ n_1}+{2}^{{ n_5}-1+{ n_3}}{ n_1}\\
 &&-\frac38\,{2} 
^{{ n_5}+{ n_4}+{ n_2}}{ n_1}+{\frac {27}{64}}\,{2}^{{ n_5}+
{ n_4}+{ n_3}+{ n_2}}{ n_1}-{\frac {9}{16}}\,{2}^{{ n_5}+{
 n_4}+{ n_3}}{ n_1}-{2}^{{ n_5}}{ n_1}-\frac38\,{2}^{{ n_5}+{
 n_3}+{ n_2}}{ n_1}+\frac34\,{2}^{{ n_5}+{ n_4}}{ n_1}
	\end{array}
	\]
\end{example}
\subsection{Expanded formula}
As one can see, the first values of  $\#{\cal T}_n$ can be recursively computed but seem to be tedious. Some regularities can be observed which allow us to propose a combinatorial description of $\#{\cal T}_n$.

\noindent
For the sake of simplicity we consider the formal variables $x_j$ for $j>0$, $y$ and $z$ and define the multivariable polynomials $s^-_j$, $s^+_j$ and $s_j$ for $j\geq 0$.
\def\st{\mathtt s}\def\rt{\mathtt r}

\centerline{$\begin{array}{ll}
	\st^{-}_{0}=z,\st^{+}_{0}=2y,\\
	\st^{-}_{j}=(x_{j}-1)\st^{-}_{j-1}+\frac12x_{j}\st^{+}_{j-1},&\mbox{ for }j>0 \mbox{ and }\\
	\st^{+}_{j}=x_{j}\st^{-}_{j-1}+\frac 12x_{j}\st_{j-1}^{+}=\st_{j}^{-}+\st_{j-1}^{-},&\mbox{ for }j>0.
\end{array}$}

We set also 
\begin{equation}\label{defst}
\st_{\alpha-1}=\st^{+}_{\alpha-1} + \sum_{j=0}^{\alpha-1}\st^{-}_{j}.
\end{equation}
Notice that we recover $\#\mathcal T_{\alpha}$ from $\st_{\alpha-1}$ by setting $x_{j}=2^{n_{j+1}-1}$ 
for $1\leq j\leq \alpha-1$, $y=\frac12$, and $z=n_{1}-1$.
For technical reasons, we will use the polynomial
\begin{equation}\label{defrt}
	\rt_{\alpha-1}=\st_{\alpha-1}|_{x_{\alpha-1}\rightarrow \frac12x_{\alpha-1}}.
\end{equation}
where $P|_{x\rightarrow t}$ means that each occurrence of the variable $x$ is replaced by the expression $t$ in the polynomial $P$.

\begin{example}
~
	\begin{enumerate}
	\item $\rt_{0}=2y+z$
	\item $\rt_{1}={\it x_1}\,y+{\it x_1}\,z$
	\item $\rt_{2}=z+\frac32 \,{\it x_2}\,{\it x_1}\,y+\frac32 \,{\it x_2}\,{\it x_1}\,z-{\it x_2}\,z${}
	\item $\rt_{3}=\frac94 \,{\it x_3}\,{\it x_2}\,{\it x_1}\,y+\frac94 \,{\it x_3}\,{\it x_2}\,{\it x_1}
\,z-{\it x_3}\,{\it x_1}\,y+{\it x_1}\,y+{\it x_1}\,z-\frac32 \,{\it x_3}\,{\it 
x_2}\,z+{\it x_3}\,z-{\it x_3}\,{\it x_1}\,z
${}
\item $\rt_{4}=-\frac94 \,{\it x_4}\,{\it x_3}\,{\it x_2}\,z-\frac32 \,{\it x_4}\,{\it x_3}\,{\it x_1
}\,y-\frac32 \,{\it x_4}\,{\it x_3}\,{\it x_1}\,z-\frac32 \,{\it x_4}\,{\it x_2}\,{
\it x_1}\,y-\frac32 \,{\it x_4}\,{\it x_2}\,{\it x_1}\,z+z+{\frac {27}{8}}\,{
\it x_4}\,{\it x_3}\,{\it x_2}\,{\it x_1}\,y+{\frac {27}{8}}\,{\it x_4}\,{
\it x_3}\,{\it x_2}\,{\it x_1}\,z+\frac32 \,{\it x_4}\,{\it x_3}\,z+{\it x_4}\,{
\it x_2}\,z+{\it x_4}\,{\it x_1}\,y+{\it x_4}\,{\it x_1}\,z+\frac32 \,{\it x_2}\,
{\it x_1}\,z+\frac32 \,{\it x_2}\,{\it x_1}\,y-{\it x_2}\,z-{\it x_4}\,z
$	
\end{enumerate}
\end{example}

It is easy to check that 
	\begin{equation}\label{Lrt2st}
		\rt_{\alpha-1}=\sum_{j=0}^{\alpha-1}\st^{-}_{j}.
	\end{equation}

Let us recall some notation. A composition is a finite list of  positive integers $c=(c_{1},\dots,c_{\ell})$. When there is no ambiguity, we  denote $c$ by $c_1\cdots c_\ell$. The length of the composition is $|c|=\ell$. We denote by $c\vDash n$ if and only if $c_{1}+\dots+c_{\ell}=n$.
We define also  $\Theta(c)=\#\{j\mid c_{j}=1, 1\leq j\leq \ell-1\}$, 
	 $[c]=x_{1}x_{1+c_{1}}x_{1+c_{1}+c_{2}}\cdots x_{1+c_{1}+\cdots+c_{\ell-1}}$ for $\ell >0$ and $[()]=0$.
	 	 
\begin{example}
	Consider $c=212113$, one has $c\vDash 10$, $|c|=6$, $\Theta(c)=3$, $[c]=x_{1}x_{3}x_{4}x_{6}x_{7}x_{8}$.
\end{example}
We define the polynomials
\begin{equation}
	\mathtt{m}_{i}=\sum_{c\vDash i}(-1)^{|c|+i}\left(\frac32\right)^{\Theta(c)}[c].
\end{equation}
\begin{lemma}\label{lmy}
	For $i>1$ we have 
	\[{}\left\{\begin{array}{l}
	\mathtt{m}_{0}=0\\
	\mathtt{m}_{1}=x_1\\
	\mathtt{m}_{i}=\left(\frac32x_{i}-1\right)\mathtt{m}_{i-1}+\frac12 x_{i}\mathtt{m}_{i-2}.
	\end{array}\right.
	\]
\end{lemma}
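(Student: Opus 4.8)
The plan is to prove the two base identities directly from the definition and then establish the three–term recurrence for $i>1$ by sorting the compositions of $i$ according to their last part. For $i=0$ the only composition is the empty one, and $[()]=0$ gives $\mathtt{m}_0=0$; for $i=1$ the only composition is $(1)$, contributing $(-1)^{2}(\tfrac32)^{0}x_1=x_1$, so $\mathtt{m}_1=x_1$. For the recurrence I would split $\mathtt{m}_i=P_i+R_i$, where $P_i$ collects the compositions $c\vDash i$ whose last part equals $1$ and $R_i$ collects those whose last part is at least $2$. The two pieces satisfy simple relations that recombine into the claimed formula.

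For $R_i$ I would use the length–preserving bijection $c=(c_1,\dots,c_\ell)\mapsto c''=(c_1,\dots,c_{\ell-1},c_\ell-1)$ from the compositions of $i$ ending in a part $\ge 2$ onto all compositions of $i-1$. Since the last part is never used in forming $[\,\cdot\,]$ and, being $\ge 2$, never contributes to $\Theta$, one has $[c]=[c'']$ and $\Theta(c)=\Theta(c'')$, while $|c|=|c''|$ produces a single sign flip $(-1)^{|c|+i}=-(-1)^{|c''|+(i-1)}$. Summing over the bijection yields $R_i=-\mathtt{m}_{i-1}$ (valid for $i\ge 2$, and consistent with $R_1=0=-\mathtt{m}_0$).

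For $P_i$ I would instead remove the trailing part $1$: the map $c=(c_1,\dots,c_{\ell-1},1)\mapsto c'=(c_1,\dots,c_{\ell-1})$ is a bijection onto all compositions of $i-1$. Here the top index of $[c]$ is $1+(i-1)=i$, so $[c]=x_i\,[c']$, and the sign again matches that of $\mathtt{m}_{i-1}$. The delicate point is the $\Theta$ bookkeeping: dropping the final part promotes $c_{\ell-1}$ to the last position of $c'$, and because $\Theta$ ignores the last part, $\Theta(c)$ exceeds $\Theta(c')$ by exactly one when $c'$ ends in a part equal to $1$ and equals it otherwise. Splitting the sum accordingly puts an extra factor $\tfrac32$ precisely on those $c'$ ending in $1$ (i.e. on $P_{i-1}$) and no extra factor on the remaining ones (i.e. on $R_{i-1}$), giving $P_i=x_i\big(\tfrac32 P_{i-1}+R_{i-1}\big)$.

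Finally I would recombine. From $R_{i-1}=-\mathtt{m}_{i-2}$ and $P_{i-1}=\mathtt{m}_{i-1}-R_{i-1}=\mathtt{m}_{i-1}+\mathtt{m}_{i-2}$ one gets $\tfrac32 P_{i-1}+R_{i-1}=\tfrac32\mathtt{m}_{i-1}+\tfrac12\mathtt{m}_{i-2}$, whence $\mathtt{m}_i=P_i+R_i=x_i\big(\tfrac32\mathtt{m}_{i-1}+\tfrac12\mathtt{m}_{i-2}\big)-\mathtt{m}_{i-1}=(\tfrac32 x_i-1)\mathtt{m}_{i-1}+\tfrac12 x_i\mathtt{m}_{i-2}$, as claimed; the conventions $\mathtt{m}_0=0$ and $R_1=0$ make the case $i=2$ consistent. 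The main obstacle I expect is exactly the $\Theta$ bookkeeping in the $P_i$ step: one must notice that deleting a trailing $1$ alters the $\Theta$-count precisely when the preceding part is also a $1$, which is what forces the coefficient $\tfrac32$ and the split of $\mathtt{m}_{i-1}$ into $P_{i-1}$ and $R_{i-1}$. The sign and index bookkeeping in the $R_i$ bijection is routine by comparison.
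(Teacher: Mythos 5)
Your proof is correct and takes essentially the same route as the paper's: both arguments split the compositions of $i$ by their last part (and, for those ending in $1$, by the preceding part), use the same two bijections (deleting a trailing $1$, decrementing a last part $\geq 2$), and do the identical $\Theta$/sign/$[\cdot]$ bookkeeping. Your named partial sums $P_i$, $R_i$ and the relation $R_{i-1}=-\mathtt{m}_{i-2}$ are just a repackaging of the paper's displayed splitting and final telescoping identity, so nothing needs to be flagged.
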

\begin{proof}
Let $c\vDash i$. If $c=c'1$ then $c'\vDash i-1$, $|c'|=|c|-1$, $[c]=[c']x_{i}$ and
\begin{equation}
	\Theta(c)=\left\{\begin{array}{ll}\Theta(c')+1&\mbox{ if }c_{|c'|}= 1\\ 
	\Theta(c')&\mbox{ if }c_{|c'|}\neq 1\end{array}\right.
\end{equation}
If $c=c'p$ with $p>1$ then setting $c''=c'(p-1)$, we have  $c''\vDash i-1$ with $|c''|=|c|$, $[c]=[c'']$ and $\Theta(c'')=\Theta(c)$.{}
According to the previous remarks, the sum splits as follows
\begin{equation}
\begin{array}{rcl}
	\mathtt m_{i}&=&\displaystyle \sum_{c\vDash i\atop c_{|c|}=c_{|c|-1}=1}(-1)^{|c|+i}\left(3\over 2\right)^{\Theta(c)}[c]+
	\sum_{c\vDash i\atop c_{|c|}=1, c_{|c|-1}>1}(-1)^{|c|+i}\left(3\over 2\right)^{\Theta(c)}[c]+
	\displaystyle\sum_{c\vDash i\atop c_{|c|}\neq 1}(-1)^{|c|+i}\left(3\over 2\right)^{\Theta(c)}[c]\\&=&
	\displaystyle\sum_{c\vDash i-1\atop c_{|c|}=1}(-1)^{|c|+1+i}\left(3\over 2\right)^{\Theta(c)+1}[c]x_{i}+
	\sum_{c\vDash i-1\atop c_{|c|}>1}(-1)^{|c|+1+i}\left(3\over 2\right)^{\Theta(c)}[c]x_{i}+
	\displaystyle\sum_{c\vDash i-1}(-1)^{|c|+i}\left(3\over 2\right)^{\Theta(c)}[c]\\
	&=& \displaystyle\left(\frac32x_{i}-1\right)\sum_{c\vDash i-1\atop c_{|c|}=1}(-1)^{|c|+i+1}\left(3\over 2\right)^{\Theta(c)}[c]
	+(x_{i}-1)\sum_{c\vDash i-1\atop c_{|c|}>1}(-1)^{|c|+i+1}\left(3\over 2\right)^{\Theta(c)}[c]\\
	&=& \displaystyle\left(\frac32x_{i}-1\right)\sum_{c\vDash i-1}(-1)^{|c|+i+1}\left(3\over 2\right)^{\Theta(c)}[c]
	+\frac12x_{i}\sum_{c\vDash i-1\atop c_{|c|}>1}(-1)^{|c|+i}\left(3\over 2\right)^{\Theta(c)}[c]\\
\end{array}
\end{equation}
And similarly, we also have
\begin{equation}
\sum_{c\vDash i-1\atop c_{|c|}>1}(-1)^{|c|+i}\left(3\over 2\right)^{\Theta(c)}[c]=
\sum_{c\vDash i-2}(-1)^{|c|+i}\left(3\over 2\right)^{\Theta(c)}[c].
\end{equation}
We deduce
\begin{equation}\label{eq2my}
	\mathtt m_{i}=(\frac32 x_{i}-1)\mathtt m_{i-1}+\frac12x_{i}\mathtt m_{i-2}.
\end{equation}
\end{proof}
\begin{lemma}\label{sy-}
	Let $\st^{y}_{i}$ be the coefficient of $y$ in $\st^{-}_{i}$. We have
	\[{}
	\st^{y}_{i}=\sum_{c\vDash i}(-1)^{|c|+i}\left(\frac32\right)^{\Theta(c)}[c]
	\]
\end{lemma}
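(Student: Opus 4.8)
The plan is to recognize that the right-hand side of the statement is \emph{verbatim} the polynomial $\mathtt{m}_{i}$ introduced just before Lemma~\ref{lmy}. Hence the whole task reduces to proving $\st^{y}_{i}=\mathtt{m}_{i}$, and for this it suffices to show that $\st^{y}_{i}$ satisfies the same second-order recurrence and the same initial data as $\mathtt{m}_{i}$; uniqueness of a sequence determined by a two-term linear recurrence then closes the argument. So I will not expand the compositional sum at all — I will simply match recurrences.

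First I would justify that extracting the coefficient of $y$ is a legitimate and well-behaved operation. The variable $y$ enters the family $\st^{\pm}_{j}$ only through the seed $\st^{+}_{0}=2y$, and every defining relation is linear; therefore each $\st^{-}_{j}$ and $\st^{+}_{j}$ is affine in $y$, and the operator ``coefficient of $y$'' commutes with the recurrences. This lets me work with $\st^{y}_{j}$ directly. The central computation is then to decouple the cross-recurrence: using the identity $\st^{+}_{j-1}=\st^{-}_{j-1}+\st^{-}_{j-2}$ (valid for $j\geq 2$), I substitute into $\st^{-}_{j}=(x_{j}-1)\st^{-}_{j-1}+\frac12 x_{j}\st^{+}_{j-1}$ to obtain the single second-order relation
\[
\st^{-}_{j}=\left(\frac32 x_{j}-1\right)\st^{-}_{j-1}+\frac12 x_{j}\st^{-}_{j-2}.
\]
Reading off coefficients of $y$ gives $\st^{y}_{j}=\left(\frac32 x_{j}-1\right)\st^{y}_{j-1}+\frac12 x_{j}\st^{y}_{j-2}$, which is exactly the recurrence of Lemma~\ref{lmy}.

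Finally I would verify the two base cases: $\st^{-}_{0}=z$ yields $\st^{y}_{0}=0=\mathtt{m}_{0}$, and computing $\st^{-}_{1}=(x_{1}-1)z+x_{1}y$ yields $\st^{y}_{1}=x_{1}=\mathtt{m}_{1}$. Matching recurrence and initial conditions, a one-line induction on $i$ gives $\st^{y}_{i}=\mathtt{m}_{i}$, and the \emph{definition} of $\mathtt{m}_{i}$ supplies the claimed formula $\sum_{c\vDash i}(-1)^{|c|+i}\left(\frac32\right)^{\Theta(c)}[c]$.

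I do not expect a genuine obstacle here, since the heavy combinatorial work (the passage between the compositional sum and the recurrence) is already done inside Lemma~\ref{lmy}. The only points demanding care are bookkeeping ones: checking that the index restriction $j\geq 2$ on $\st^{+}_{j-1}=\st^{-}_{j-1}+\st^{-}_{j-2}$ is respected, so that the case $j=1$ is correctly absorbed into the base case rather than into the recurrence, and confirming that the affine-in-$y$ structure really does make coefficient extraction commute with the linear recurrences.
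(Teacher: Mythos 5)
Your proposal is correct and follows essentially the same route as the paper, whose one-line proof is precisely that $\st^{y}_{i}$ satisfies the same recurrence and initial conditions as $\mathtt{m}_{i}$ from Lemma~\ref{lmy}. You merely make explicit the details the paper leaves implicit (linearity in $y$, the decoupling $\st^{-}_{j}=\left(\frac32 x_{j}-1\right)\st^{-}_{j-1}+\frac12 x_{j}\st^{-}_{j-2}$ via $\st^{+}_{j-1}=\st^{-}_{j-1}+\st^{-}_{j-2}$, and the base cases $\st^{y}_{0}=0$, $\st^{y}_{1}=x_{1}$), all of which check out.
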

\begin{proof}
	It suffices to remark that $\st^{y}_{i}$ satisfies the same recurrence and initial conditions as $\mathtt m_{i}.$
\end{proof}
The following result is obtained in a very similar way.
\begin{lemma}\label{sz-}
	Let $\st^{z}_{i}$ be the coefficient of $z$ in $\st^{-}_{i}$. We have
	\[{}
	\st^{z}_{i}=\sum_{c\vDash i+1}(-1)^{|c|+i+1}\left(\frac32\right)^{\tilde\Theta(c)}\{c\}
	\]
	where $\{c\}=x_{c_{1}}x_{c_{1}+c_{2}}\cdots x_{c_{1}+\cdots+c_{|c|-1}}$ for any non-empty list $c$ and $\tilde\Theta(c)=\#\{i\in\{2,\dots,|c|-1\}\mid c_{i}=1\}$.
\end{lemma}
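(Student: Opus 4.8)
The plan is to follow the pattern of Lemma \ref{sy-}: I would first extract from the crossing recurrence a three-term recurrence for the coefficient $\st^{z}_{i}$, and then show that the claimed closed form satisfies the very same recurrence together with the matching initial data. Since both sides are then the unique solution of a linear recurrence with prescribed first two terms, they coincide.

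For the recurrence, I would exploit that the defining relations $\st^{-}_{j}=(x_{j}-1)\st^{-}_{j-1}+\frac12 x_{j}\st^{+}_{j-1}$ and $\st^{+}_{j-1}=\st^{-}_{j-1}+\st^{-}_{j-2}$ (the latter valid for $j>1$) are linear in the polynomials $\st^{\pm}$ with coefficients free of $y$ and $z$. Taking the coefficient of $z$ on both sides therefore gives, for $i>1$,
\[
\st^{z}_{i}=(x_{i}-1)\st^{z}_{i-1}+\tfrac12 x_{i}\bigl(\st^{z}_{i-1}+\st^{z}_{i-2}\bigr)=\bigl(\tfrac32 x_{i}-1\bigr)\st^{z}_{i-1}+\tfrac12 x_{i}\st^{z}_{i-2},
\]
while $\st^{-}_{0}=z$ and $\st^{-}_{1}=(x_{1}-1)z+x_{1}y$ yield the initial values $\st^{z}_{0}=1$ and $\st^{z}_{1}=x_{1}-1$. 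This is precisely the recurrence of Lemma \ref{lmy}, but with shifted initial conditions, so I cannot simply reuse $\mathtt m_{i}$; I must prove a fresh analogue of Lemma \ref{lmy} for the new sum.

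Writing $\mathtt n_{i}=\sum_{c\vDash i+1}(-1)^{|c|+i+1}\left(\frac32\right)^{\tilde\Theta(c)}\{c\}$ for the right-hand side, I would verify the two base cases by directly enumerating the compositions of $1$ and of $2$ (giving $\mathtt n_{0}=1$ and $\mathtt n_{1}=x_{1}-1$), and then establish the recurrence by splitting the compositions $c=(c_{1},\dots,c_{\ell})\vDash i+1$ according to their last part, exactly as in Lemma \ref{lmy}. When $c_{\ell}>1$, lowering the last part by one is a sign-reversing bijection onto the compositions of $i$ that leaves $\{c\}$ and $\tilde\Theta(c)$ unchanged, contributing $-\mathtt n_{i-1}$. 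When $c_{\ell}=1$, writing $c=c'1$ with $c'\vDash i$ one gets $\{c\}=\{c'\}x_{i}$ together with $\tilde\Theta(c)=\tilde\Theta(c')+1$ if $c'$ ends in $1$ and has length at least $2$, and $\tilde\Theta(c)=\tilde\Theta(c')$ otherwise; summing over $c'$ and writing $\frac32=1+\frac12$ isolates a full copy of $\mathtt n_{i-1}$ plus a $\frac12$‑weighted sum over the compositions of $i$ that end in $1$ and have length at least $2$. Applying the "lower the last part" bijection to the complementary family identifies that weighted sum with $\mathtt n_{i-1}+\mathtt n_{i-2}$, and collecting the three pieces gives exactly $\bigl(\frac32 x_{i}-1\bigr)\mathtt n_{i-1}+\frac12 x_{i}\mathtt n_{i-2}$.

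The delicate point — and the only genuine difference with Lemma \ref{lmy} — is the bookkeeping of the statistic $\tilde\Theta$, which disregards both the first and the last part of $c$, whereas $\Theta$ disregards only the last. This is what introduces the extra length hypothesis "$|c'|\ge 2$" in the adjustment above, and it forces a separate check that the short compositions (of length $1$ and $2$) do not corrupt the split; for $i>1$ these degenerate cases turn out to be harmless, which is exactly why the recurrence is stated only for $i>1$. Once the recurrence and the two initial values are matched, the identity $\st^{z}_{i}=\mathtt n_{i}$ follows by induction, which is the assertion of the lemma.
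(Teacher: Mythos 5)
Your proof is correct and takes essentially the same route as the paper: the paper's own one-sentence proof merely observes that the $\mathtt{s}^{z}_{i}$ satisfy the same recurrence as the $\mathtt{s}^{y}_{i}$ but with initial conditions $\mathtt{s}^{z}_{0}=1$ and $\mathtt{s}^{z}_{1}=x_{1}-1$, appealing to the pattern of Lemmas \ref{lmy} and \ref{sy-}. Your proposal supplies exactly the details this leaves implicit (extracting the recurrence from the defining relations, checking the two base cases, and redoing the composition-splitting argument with the statistic $\tilde\Theta$ in place of $\Theta$), and those details are all correct.
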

\begin{proof}
The proof follows the same pattern as in Lemma \ref{lmy} and \ref{sy-}, since the sequence of the $\mathtt{s}^z_i$ admits the same recurrence as the sequence of the $\mathtt{s}^y_i$ but with the initial conditions $\mathtt{s}^z_0= 1$ and $\mathtt{s}^z_1=x_1-1$.
\end{proof}
From Lemmas \ref{sy-} and \ref{sz-}, one obtains
\begin{equation}\label{eqs-}
	\st^{-}_{i}=y\sum_{c\vDash i}(-1)^{|c|+i}\left(3\over 2\right)^{\Theta(c)}[c]+z\sum_{c\vDash i+1}(-1)^{|c|+i+1}
	\left(3\over 2\right)^{\tilde\Theta(c)}\{c\}.
\end{equation}
\begin{example}	
The compositions of $4$ and $5$ are summarized in the following tables:\\
\centerline{\begin{minipage}[t]{0.3\linewidth}
	\[\begin{array}{c|c|c}
		c&\Theta(x)&[c]\\
		4&0&x_{1}\\
		31&0&x_{1}x_{4}\\
		13&1&x_{1}x_{2}\\
		211&1&x_{1}x_{3}x_{4}\\
		121&1&x_{1}x_{2}x_{4}\\
		112&2&x_{1}x_{2}x_{3}\\
		22&0&x_{1}x_{3}\\
		1111&3&x_{1}x_{2}x_{3}x_{4}
	\end{array}\]
\end{minipage}
\begin{minipage}[t]{0.3\linewidth}
	\[\begin{array}{c|c|c}
		c&\tilde\Theta(x)&\{c\}\\
		5&0&1\\
		41&0&x_{4}\\
		14&0&x_{1}\\
		32&0&x_{3}\\
		23&0&x_{2}\\
		311&1&x_{3}x_{4}\\
		131&0&x_{1}x_{4}\\
		113&1&x_{1}x_{2}\\
		221&0&x_{2}x_{4}\\
		212&1&x_{2}x_{3}\\
		122&0&x_{1}x_{3}\\
		2111&2&x_{2}x_{3}x_{4}\\
		1211&1&x_{1}x_{3}x_{4}\\
		1121&1&x_{1}x_{2}x_{4}\\
		1112&2&x_{1}x_{2}x_{3}\\
		11111&3&x_{1}x_{2}x_{3}x_{4}
	\end{array}\]
\end{minipage}}

So we have
\[{}\begin{array}{rcl}
\st^{-}_{4}&=&y\left(-x_{1}+x_{1}x_{4}+\frac32x_{1}x_{2}-\frac32x_{1}x_{3}x_{4}-\frac32x_{1}x_{2}x_{4}
-\frac94x_{1}x_{2}x_{3}+\frac{27}8x_{1}x_{2}x_{3}x_{4}\right){}
+\\
&&z\left(1-x_{4}-x_{1}-x_{2}-x_{3}+\frac32x_{3}x_{4}+x_{1}x_{4}+\frac32x_{1}x_{2}
\right.\\
&&\left.+x_{2}x_{4}+\frac32x_{2}x_{3}+x_{1}x_{3}-\frac94 x_{2}x_{3}x_{4}-\frac32x_{1}x_{3}x_{4}
-\frac32x_{1}x_{2}x_{4}-\frac94x_{1}x_{2}x_{3}+\frac{27}8x_{1}x_{2}x_{3}x_{4}\right).\end{array}
\]
\end{example}
\begin{theorem}
The number of valid states in the catenation of $\alpha$ automata of size $n_{1}$, $n_{2},\dots, n_{\alpha}$ 
is bounded by the number obtained by setting $x_{1}=2^{n_{2}-1},\dots, x_{\alpha-2}=2^{n_{\alpha-1}-1}$, $x_{\alpha-1}=2^{n_{\alpha}}$,
 $y=\frac12$ and 
$z=n_{1}-1$ in the polynomial
	\begin{equation}\label{devrt}\rt_{\alpha-1}=y\sum_{c=c'm\vDash \alpha-1\atop m\mbox{ }odd}(-1)^{|c|+\alpha-1}
	\left(3\over 2\right)^{\Theta(c)}[c]
	+z\sum_{c=c'm\vDash k\atop m\mbox{ }odd}(-1)^{|c|+\alpha}\left(3\over 2\right)^{\tilde \Theta(c)}\{c\}.\end{equation}
\end{theorem}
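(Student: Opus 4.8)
The plan is to prove two things packaged in the statement: first, that $\rt_{\alpha-1}$ coincides with the closed form (\ref{devrt}); second, that the prescribed evaluation of that polynomial bounds the number of valid states. The bound is the easy half. We already know that the number of valid states is at most $\#\mathcal T_\alpha$, and that $\#\mathcal T_\alpha$ is recovered from $\st_{\alpha-1}$ by setting $x_j=2^{n_{j+1}-1}$ for $1\le j\le\alpha-1$, $y=\frac12$, $z=n_1-1$. Since $\rt_{\alpha-1}=\st_{\alpha-1}|_{x_{\alpha-1}\to\frac12 x_{\alpha-1}}$ by (\ref{defrt}), evaluating $\rt_{\alpha-1}$ at $x_{\alpha-1}=2^{n_\alpha}$ (with the other variables as above) is the same as evaluating $\st_{\alpha-1}$ at $x_{\alpha-1}=\frac12\cdot 2^{n_\alpha}=2^{n_\alpha-1}$, i.e. at the standard point. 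Hence the indicated value of $\rt_{\alpha-1}$ equals $\#\mathcal T_\alpha$, which dominates the number of valid states.

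For the closed form I would start from (\ref{Lrt2st}), namely $\rt_{\alpha-1}=\sum_{j=0}^{\alpha-1}\st^{-}_{j}$, and substitute the expression (\ref{eqs-}) for each $\st^{-}_{j}$. Collecting the coefficients of $y$ and of $z$ writes $\rt_{\alpha-1}$ as $y\sum_{j=0}^{\alpha-1}\st^{y}_{j}+z\sum_{j=0}^{\alpha-1}\st^{z}_{j}$, with $\st^{y}_{j}$ and $\st^{z}_{j}$ the sums over compositions from Lemmas \ref{sy-} and \ref{sz-}. It then remains to prove the two identities
\[
\sum_{j=0}^{\alpha-1}\st^{y}_{j}=\sum_{\substack{c=c'm\vDash\alpha-1\\ m\text{ odd}}}(-1)^{|c|+\alpha-1}\left(\frac32\right)^{\Theta(c)}[c],
\]
together with its analogue for $\sum_{j=0}^{\alpha-1}\st^{z}_{j}$, where the surviving objects are compositions of $\alpha$ with odd last part (so $k=\alpha$ in (\ref{devrt})).

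The heart of the argument, and the step I expect to be the main obstacle, is a sign-cancellation over $j$. The crucial observation is that the monomial $[c]=x_1x_{1+c_1}\cdots x_{1+c_1+\cdots+c_{\ell-1}}$ and the exponent $\Theta(c)$ depend only on the parts $c_1,\dots,c_{\ell-1}$ of $c=(c_1,\dots,c_\ell)$, not on the last part $c_\ell$. I would group the terms of $\sum_{j=0}^{\alpha-1}\st^{y}_{j}$ by their prefix $p=(c_1,\dots,c_{\ell-1})$: fixing $p$ of length $\ell-1$ and sum $s$, all compositions $c=(p,c_\ell)\vDash j$ with $s<j\le\alpha-1$ share the same weight and length $\ell$, so they contribute
\[
(-1)^{\ell}\left(\frac32\right)^{\Theta(c)}[c]\sum_{j=s+1}^{\alpha-1}(-1)^{j}.
\]
The inner alternating sum vanishes unless its number $\alpha-1-s$ of terms is odd, and when odd it equals $(-1)^{\alpha-1}$. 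Thus a prefix survives exactly when $m=\alpha-1-s$ is odd, in which case $(p,m)$ is precisely a composition of $\alpha-1$ with odd last part; the surviving sign $(-1)^{\ell}(-1)^{\alpha-1}=(-1)^{|c|+\alpha-1}$ and the weight then match the right-hand side. The $z$-part is handled identically, using that $\{c\}$ and $\tilde\Theta(c)$ likewise ignore the last part of $c$; there the summation range is shifted by one (Lemma \ref{sz-} sums over $c\vDash j+1$), so the survivors are compositions of $\alpha$ with odd last part and sign $(-1)^{|c|+\alpha}$. The delicate points are the boundary bookkeeping (the empty prefix, and the degenerate value $[()]=0$ which kills the $j=0$ term) and checking that the parity condition for survival is exactly ``last part odd'' carrying the correct residual sign.
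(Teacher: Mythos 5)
Your proof is correct and takes essentially the same route as the paper's: after reducing to formula (\ref{devrt}) via (\ref{defrt}), (\ref{defst}) and (\ref{Lrt2st}), your grouping of compositions by their prefix and evaluating the alternating sum over the last part is precisely the paper's ``telescoping'' argument, both resting on the observation that $[c]$, $\Theta(c)$, $\{c\}$ and $\tilde\Theta(c)$ do not depend on the final part of $c$. The only difference is cosmetic: you spell out the substitution step identifying the evaluated $\mathtt{r}_{\alpha-1}$ with $\#\mathcal{T}_{\alpha}$ (and hence the bound on valid states), which the paper dispatches in one sentence.
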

\begin{proof}
	From (\ref{defst}) and (\ref{defrt}) one has only to prove formula (\ref{devrt}). 
	Suppose that $c\vDash p$, we have $[c1]=[c2]=\dots=[c(\alpha-p-1)]$ and $\Theta(c1)=\cdots=\Theta(c(\alpha-p-1))$.{}
	Hence, by telescoping, the coefficient of the monomial $[c(\alpha-p-1)]$ in $\sum_{i=0}^{\alpha-1}\st^y_{i}$ 
	equals $0$ if $\alpha-p-1$ is even and equals $(-1)^{|c|+\alpha-1}\left(\frac32\right)^{\Theta(c(\alpha-1))}$ if $\alpha-p-1$ is odd. Hence,
	\begin{equation}
		\sum_{i=0}^{\alpha-1}\st^{y}_{i}=\sum_{c=c'm\vDash \alpha-1\atop m\mbox{ }odd}(-1)^{|c|+\alpha-1}
		\left(3\over 2\right)^{\Theta(c)}[c].
	\end{equation}
	For similar reasons, we find
	\begin{equation}
		\sum_{i=0}^{\alpha-1}\st^{z}_{i}=\sum_{c=c'm\vDash \alpha\atop m\mbox{ }odd}(-1)^{|c|+\alpha}
		\left(3\over 2\right)^{\tilde \Theta(c)}\{c\}.
	\end{equation}
	Since 
	\begin{equation}
		\sum_{i=0}^{\alpha-1}\st^{-}_{i}=y\sum_{i=0}^{\alpha-1}\st^{y}_{i}+z\sum_{i=0}^{\alpha-1}\st^{z}_{i},
	\end{equation}
	Equation \ref{Lrt2st} allows to conclude.
\end{proof}
\begin{example}\rm
		The following tables summarize the compositions of $3$ and $4$ such that the last entry is odd:\\
\centerline{\begin{minipage}[t]{0.3\linewidth}
	\[\begin{array}{c|c|c}
		c&\Theta(x)&[c]\\
		3&0&x_{1}\\
		21&1&x_{1}x_{3}\\
		111&2&x_{1}x_{2}x_{3}
	\end{array}\]
\end{minipage}
\begin{minipage}[t]{0.3\linewidth}
	\[\begin{array}{c|c|c}
		c&\tilde\Theta(x)&\{c\}\\
		31&0&x_{3}\\
		13&0&x_{1}\\
		211&1&x_{2}x_{3}\\
		121&0&x_{1}x_{3}\\
		1111&2&x_{1}x_{2}x_{3}
	\end{array}\]
\end{minipage}}
	
	So we obtain
	\[{}
	\rt_{3}=y(x_{1}+\frac94x_{1}x_{2}x_{3}-x_{1}x_{3})+(x_{1}+\frac94x_{1}x_{2}x_{3}-\frac32 x_{2}x_{3}-x_{1}x_{3}+x_{3})z
	\]
and then the cardinal of $\mathcal T_{4}$ is
\[{}
\frac12(2^{n_{2}-1}+\frac942^{n_{2}+n_{3}+n_{4}-2}-2^{n_{2}+n_{4}-1})
+(2^{n_{2}-1}+\frac942^{n_{2}+n_{3}+n_{4}-2}-\frac322^{n_{3}+n_{4}-1}-2^{n_{2}+n_{4}-1}+2^{n_{4}})(n_{1}-1).
\]
\end{example}

\section{A $(\alpha+1)$-letters witness for the catenation of $\alpha$ automata}

In this section, we give  a family of witnesses automata for the multiple catenation. These witnesses are Brzozowski automata computed with the operations given in Table \ref{tableau}. Let us recall that for an automaton, $\mathds{1}$ stands for  the identity, $p$ for the cycle $(0,\ldots, n-1)$ on $Q=\{0,\ldots, n-1\}$, $t$ for the transposition $(0,1)$ of the two first states and $c$ for the contraction of state $1$ to  state $0$.

\begin{table}[H]
\centerline{  \begin{tikzpicture}[scale=0.5,node distance=1cm,shorten >=0pt]   
  \draw (1,1)rectangle (9,10);
	\draw (1.5,10.5) node (q0) {$A_1$};
	\draw (2.5,10.5) node (q1) {$A_2$};
	\draw (3.5,10.5) node (q2) {$A_3$};
	\draw (6.5,10.5) node (q3) {$\ldots$};
	\draw (8.5,10.5) node (q4) {$A_\alpha$};
	\draw (0.2,9.5) node (r0) {$\sigma_1$};
	\draw (0.2,8.5) node (r1) {$\sigma_2$};
	\draw (0.2,7.5) node (r2) {$\sigma_3$};
	\draw (0.2,5.5) node (r3) {$\ldots$};
	\draw (0.2,2.5) node (r4) {$\sigma_\alpha$};
	\draw (0.2,1.5) node (r5) {$\sigma_{\alpha+1}$};
	\draw (1.5,9.5) node (q0) {$t$};
	\draw (2.5,9.5) node (q1) {$c$};
	\draw (3.5,9.5) node (q2) {$\mathds{1}$};
	\draw (3.8,9.5) -- (8.2,9.5);
	\draw (8.5,9.5) node (q4) {$\mathds{1}$};
	\draw (1.5,8.5) node (q0) {$p$};
	\draw (2.5,8.5) node (q1) {$t$};
	\draw (3.5,8.5) node (q2) {$c$};
%
	\draw (1.5,7.5) node (q0) {$\mathds{1}$};
	\draw (2.5,7.5) node (q1) {$p$};
	\draw (3.5,7.5) node (q2) {$t$};
	\draw (4.5,7.5) node (q3) {$c$};
%
  \draw [dashed] (4,7) -- (7,4);
  \draw (8.5,2.5) node (q0) {$t$};
  \draw (7.5,3.5) node (q0) {$t$};
  \draw [dashed] (3,7) -- (7,3);
  \draw (8.5,1.5) node (q0) {$p$};
  \draw (7.5,2.5) node (q0) {$p$};
  \draw [dashed] (5,7) -- (7,5);
  \draw (8.5,3.5) node (q0) {$c$};
  \draw (7.5,4.5) node (q0) {$c$};
  \draw  (1.5,7.2) -- (1.5,1.8);
  \draw  (1.8,1.5) -- (7.2,1.5);
  \draw (1.8,7.2) -- (7.2,1.8);
  \draw (1.5,1.5) node (q0) {$\mathds{1}$};
  \draw (7.5,1.5) node (q0) {$\mathds{1}$};
  \draw (8.5,4.5) node (q0) {$\mathds{1}$};
  \draw  (3.8,9.2) -- (8.2,4.8);
  \draw  (8.5,9.2) -- (8.5,4.8);
  \draw (3,3.5) node (q0) {$\mathds{1}$};
  \draw (7,7.5) node (q0) {$\mathds{1}$};
\end{tikzpicture}}
\caption{Brzozowski witnesses for the multiple catenation}
\label{tableau}
\end{table}  

As a consequence, for any automaton $A_k$, $\sigma_{k-1}$ acts as the contraction $\left(\begin{array}{l}1\\0\end{array}\right)$, $\sigma_k$ acts as the transposition $(0,1)$ and $\sigma_{k+1}$ acts as the cycle $(0,\ldots, n_k-1)$ (see figure \ref{Ak}). Notice that there is no contraction for the automaton $A_1$.


\begin{figure}[htb]
	\centerline{
		\begin{tikzpicture}[node distance=2cm, bend angle=25]
			\node[state,initial] (0) {$0$};
			\node[state] (1) [right of=0] {$1$};
			\node[state] (2) [right of=1] {$2$};
			\node (etc) [right of=2] {$\ldots$};
			\node[state, rounded rectangle] (s-2) [right of=etc] {$n_k-2$};
			\node[state, rounded rectangle, accepting] (s-1) [right of=s-2] {$n_k-1$};
			\path[->]
        (0) edge[bend left] node {$\sigma_{k+1},\sigma_k$} (1)
        (1) edge[bend left] node {$\sigma_{k+1}$} (2)
        (2) edge[bend left] node {$\sigma_{k+1}$} (etc)
        (etc) edge[bend left] node {$\sigma_{k+1}$} (s-2)
        (s-2) edge[bend left] node {$\sigma_{k+1}$} (s-1)
        (s-1) edge[out=-115, in=-95, looseness=.4] node[above] {$\sigma_{k+1}$} (0)
        (1) edge[bend left] node {$\sigma_k,\sigma_{k-1}$} (0)
			;
    \end{tikzpicture}}
  \caption{The DFA $A_k$ without the identity transitions}
	\label{Ak}
\end{figure}

Let us recall that each state is associated to a sequence. We want to prove that, for our family of automata, the size of the minimal DFA for the multiple catenation is the number of valid sequences.
\begin{proposition}\label{séparation}
Two distinct states  $s=(S_1,\ldots ,S_{\alpha})$ and $s'=(S_1',\ldots,S_{\alpha}')$ are not equivalent.
\end{proposition}

\begin{proof} By induction on  $\mu=\mathrm{max}(\{k\in[1,\alpha]|S_k\neq S_k'\})$. If $k=\alpha$ then  there exists $j$ such that $q_j\in S_{\alpha}\oplus S_{\alpha}'$ and the word $(\sigma_{\alpha+1})^{n_{\alpha}-j}$ separates $s$ and $s'$. If $k<\alpha$ then  there exists $j$ such that $q_j\in S_k\oplus S_k'$. The word $(\sigma_{k+1})^{n_k-j}\sigma_k(\sigma_{k+2})^{n_{k+1}-1}$ sends $s$ and $s'$, respectively, to two states $t=(T_1,...,T_{\alpha})$ and $t'=(T_1',...,T_{\alpha}')$ such that $0\in T_{k+1}\oplus T_{k+1}'$. The states $t$ and $t'$ being non equivalent by the induction hypothesis, $s$ and $s'$ also are.
\end{proof}

We now investigate the accessibility problem. The main difficulty appears when we have to access some valid state $s$ by using a contraction. In such a situation, it is a bit technical to find the predecessor of $s$. Indeed, a contraction on a DFA implies a transposition and a permutation on the two previous DFAs with some possible disturbances due to the property \textit{P3} (when a final state is reached in some DFA, the initial state of the next one is also reached). To solve this difficulty, we need some technical lemmas.
Let $A_k$ be an automaton defined in table \ref{tableau}. For any state $q$ of $A_k$, 
$p\cdot q$ stands for $\sigma_{k+1}\cdot q$  and 
$t\cdot q$ stands for $\sigma_k\cdot q$. As usual, for any set of states  $S$, we denote  $p\cdot S=\bigcup_{q\in S}p\cdot q$ and $t\cdot S=\bigcup_{q\in S}t\cdot q$. These notations allows to shorten some expressions by omitting unambiguous indexes.
%
\begin{definition}
For any state $s=(i,S_2,\ldots ,S_{k-1},S_k,S_{k+1},\ldots ,S_{\alpha})$, we define the two following transformations:
\begin{itemize}
\item
		$\tau_k\cdot s=(i,S_2,\ldots ,S_{k-3},p\cdot S_{k-2},t\cdot S_{k-1},S_k\cup\{1\},S_{k+1},\ldots ,S_{\alpha})$, 
		$\forall k\in [2,\alpha]$\\
\item
  	$\nu_k\cdot s=
  	(i,S_2,\ldots ,p^{\mathrm{min}(S_{k-3})+1}\cdot S_{k-3},p\cdot (S_{k-2}\setminus\{0\}),t\cdot S_{k-1},S_k\cup\{1\},S_{k+1},\ldots ,S_{\alpha})$, 
  	$\forall k\in [4,\alpha]$\\
  	$\nu_3\cdot s=\left\{
  	\begin{array}{ll}
  	  (p.i,t\cdot S_2,S_3\cup\{1\},S_4,\ldots ,S_{\alpha}) & \mathrm{if}\ i>0\\
  	  (0,t\cdot S_2,S_3\cup\{1\},S_4,\ldots ,S_{\alpha}) & \mathrm{if}\ i=0
  	\end{array}
  	\right.$
\end{itemize}
\end{definition}

\begin{lemma}\label{valid states}
Any valid state $s$ satisfies the  following properties:
\begin{enumerate}
\item $\forall k\in [2,\alpha]$, $\tau_k\cdot s$ is a valid state if and only if ($0\not\in S_{k-2}$ or $1\in S_{k-1}$) and
	($q_f\not\in S_{k-3}$ or $1\in S_{k-2}$).\hfill ($P_{\tau}$)
\item $\forall k\in [4,\alpha]$, $\nu_k\cdot s$ is a valid state if and only if  $1\in S_{k-2}$ and ($q_f\not\in S_{k-4}$ or
  $1\in S_{k-3}$).\hfill ($P_{\nu}$)
\item $\nu_3\cdot s$ is always a valid state.
\end{enumerate}
\end{lemma}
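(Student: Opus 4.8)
The plan is to verify directly the three defining properties of a valid state — \textit{P1} (the first component is a singleton), \textit{P2} (emptiness propagates, $S_j=\emptyset\Rightarrow S_{j+1}=\emptyset$) and \textit{P3} ($q_f\in S_j\Rightarrow 0\in S_{j+1}$) — on the perturbed sequence $(T_1,\dots,T_\alpha)$ produced by each transformation. Since $\tau_k$, $\nu_k$ and $\nu_3$ each alter only a bounded window of consecutive components (positions $k-2,k-1,k$ for $\tau_k$; positions $k-3,\dots,k$ for $\nu_k$; positions $1,2,3$ for $\nu_3$) and leave every other component equal to that of $s$, both \textit{P2} and \textit{P3} — being relations between consecutive components — are inherited from the validity of $s$ outside this window. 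Thus only the boundary pairs straddling or lying inside the window must be re-examined, and the whole argument reduces to a short local computation at each such pair, the two directions of the stated equivalences following because each translation below is itself an equivalence.

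The computation rests on four elementary facts about how membership of the distinguished states $0$ and $q_f$ behaves under $p$ and $t$ read as backward actions. Since $p\cdot S$ is the set of cycle-predecessors of $S$ and the cycle sends $0\mapsto 1$ and $q_f\mapsto 0$, one has $0\in p\cdot S\Leftrightarrow 1\in S$ and $q_f\in p\cdot S\Leftrightarrow 0\in S$; since $t$ swaps $0$ and $1$ and fixes $q_f$ (for $n_k>2$), one has $0\in t\cdot S\Leftrightarrow 1\in S$ and $q_f\in t\cdot S\Leftrightarrow q_f\in S$. For $\tau_k$, property \textit{P1} is preserved because the first component, when touched, is the image of a singleton under a permutation, and \textit{P2} holds because $p$ and $t$ preserve non-emptiness while $S_k\cup\{1\}$ is non-empty; the only genuine constraints come from \textit{P3} at the two interior boundaries. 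At $(k-2,k-1)$ the requirement ``$q_f\in p\cdot S_{k-2}\Rightarrow 0\in t\cdot S_{k-1}$'' becomes ``$0\in S_{k-2}\Rightarrow 1\in S_{k-1}$'', i.e. $0\notin S_{k-2}$ or $1\in S_{k-1}$; at $(k-3,k-2)$ the requirement ``$q_f\in S_{k-3}\Rightarrow 0\in p\cdot S_{k-2}$'' becomes ``$q_f\in S_{k-3}\Rightarrow 1\in S_{k-2}$'', i.e. $q_f\notin S_{k-3}$ or $1\in S_{k-2}$. These are precisely the two clauses of $(P_{\tau})$, while $(k-1,k)$ and $(k,k+1)$ add nothing since $t$ fixes $q_f$ and adding $1$ neither creates the value $q_f$ (when $n_k>2$) nor destroys $0\in S_{k+1}$.

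For $\nu_k$ the same scheme applies, but two design choices reshape the window. Replacing $S_{k-2}$ by $p\cdot(S_{k-2}\setminus\{0\})$ forces $q_f\notin T_{k-2}$, so \textit{P3} at $(k-2,k-1)$ is vacuous, whereas \textit{P3} at $(k-3,k-2)$ now demands $0\in p\cdot(S_{k-2}\setminus\{0\})$, i.e. $1\in S_{k-2}$ — the first clause of $(P_{\nu})$. Applying $p^{\min(S_{k-3})+1}$ keeps $q_f$ in component $k-3$, because iterating the cycle $\min(S_{k-3})+1$ times maps $q_f$ to $\min(S_{k-3})\in S_{k-3}$; hence at the boundary $(k-4,k-3)$, which is active only when $q_f\in S_{k-4}$, property \textit{P3} of $s$ forces $0\in S_{k-3}$, so $\min(S_{k-3})=0$ and $0\in T_{k-3}\Leftrightarrow 1\in S_{k-3}$, giving the second clause $q_f\notin S_{k-4}$ or $1\in S_{k-3}$. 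Finally, for $\nu_3$ the case split on $i$ is exactly what makes $q_f\notin T_1$ (for $i>0$ the predecessor $p\cdot i$ avoids $q_f$, and for $i=0$ the rule sets $T_1=\{0\}$ rather than $\{q_f\}$), so the boundary $(1,2)$ imposes no condition; the remaining boundaries behave as above, and $\nu_3\cdot s$ is valid unconditionally.

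The main obstacle is the $\nu_k$ case, where one must check that the single cyclic power $p^{\min(S_{k-3})+1}$ does the right thing at both boundaries $(k-4,k-3)$ and $(k-3,k-2)$ at once, and in particular must invoke \textit{P3} of $s$ to collapse $\min(S_{k-3})$ to $0$ exactly when $(k-4,k-3)$ is active. Alongside this, the routine but slightly delicate points are the discharge of \textit{P2} together with the non-emptiness of the intermediate components implicit in the regime where these predecessors are formed, and the degenerate situations — small indices $k$ (where the window truncates) and small automata $n_k=2$ (where $q_f=1$ coincides with the transposed or newly added state).
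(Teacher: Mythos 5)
Your proposal follows essentially the same route as the paper's proof: a local verification of \textit{P1}--\textit{P3} restricted to the window of components each transformation modifies, using the backward-action identities ($0\in p\cdot S\Leftrightarrow 1\in S$, $q_f\in p\cdot S\Leftrightarrow 0\in S$, $0\in t\cdot S\Leftrightarrow 1\in S$) to translate each boundary constraint into exactly the stated clauses, including the key point that $q_f\in S_{k-4}$ forces $0\in S_{k-3}$ by \textit{P3} of $s$, hence $\min(S_{k-3})=0$ and the second clause of $(P_{\nu})$. The paper's own argument is terser but identical in substance --- it treats \textit{P2}, the truncated windows at small $k$, and the degenerate case $n_k=2$ with the same implicitness that you flag --- so the proposal is correct relative to the paper's own standard of rigor.
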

\begin{proof}
\begin{enumerate}
\item Property  \textit{P1} is clear from the definition of $\tau$. Property \textit{P2} is deduced  from the fact that  $s$ is a valid state and that the size of any set of $s$ can not be decreased by  $\tau_k$. Property \textit{P3} comes, for $k-1$, from the first  parenthesis which asserts that it is not possible to have simultaneously $q_f \in p\cdot S_{k-2}\ (0\in S_{k-2})$ and $0\not\in e\cdot S_{k-1}\ (1\not\in S_{k-1})$, for $k-2$, from the second parenthesis which asserts that it is not possible to have simultaneously $q_f\in S_{k-3}$ and $0\not\in p\cdot S_{k-2}\ (1\not\in S_{k-2})$, and for all the other sets, from the fact that $s$ is a valid state. 
\item Property  \textit{P1} is clear from the definition of $\nu$.  Property \textit{P2} is deduced  from the fact that  $s$ is a valid state and because $1\in S_{k-2}$ involves $S_{k-2}\neq\emptyset$. Property \textit{P3} comes, for  $k-3$, from the last parenthesis which asserts that it is not possible to have simultaneously  $q_f \in S_{k-4}$ (which implies $j_0=0$) and $0\not\in p^{j_0+1}\cdot S_{k-3}\ (1\not\in S_{k-3})$, for $k-2$, from the fact that $0\in p.(S_{k-2}\setminus\{0\})$ ($1\in S_{k-2}$), for $k-1$, from the fact that $q_f\not\in p\cdot (S_{k-2}\setminus\{0\})$ and, for all the other sets, from the fact that $s$ is a valid state. 
\item clear from the definition.
\end{enumerate}
\end{proof}
\begin{remark}\label{states}
The previous proof of properties $P_{\tau}$ and $P_{\nu}$ can be admitted for the first values of  $k$, if we accept the convention that each time a set $S_j$ is considered with $j<1$, this set is assimilated to $\emptyset$. With this convention, the properties can be simplified as:
\begin{itemize}
\item $\tau_3\cdot s$ is a valid state if and only if ($i\neq 0$ or $1\in S_2$).
\item $\tau_2\cdot s$ is always a valid state.
\item $\nu_4\cdot s$ is a valid state if and only if $1\in S_2$.
\end{itemize}
This convention is often implicitly used in the following.
\end{remark}

A composition of such transformations is denoted by a word over the alphabet $\Pi=\{\tau_k\}_{k\in[2,\alpha]}\cup\{\nu_k\}_{k\in[3,\alpha]}$.

\begin{lemma}\label{mot de retour}
For any valid state  $s=(i,S_2,\ldots,S_{k-1},S_k,S_{k+1},\ldots,S_{\alpha})$ such that $0\in S_k$ et $1\not\in S_k$ we have:
\begin{enumerate}
\item if $\tau_k\cdot s$ is a valid state then  $(\tau_k\cdot s)\cdot\sigma_{k-1}=s$.
\item if $\nu_k\cdot s$ is a valid state and ($0\in S_{k-2}$ and $1\not\in S_{k-1}$) then $(\nu_k\cdot
  s)\cdot\sigma_{k-1}(\sigma_{k-2})^{\mathrm{min}(S_{k-3})+1}=s$ (for $k>3$).
\item if $i>0$ then $(\nu_3\cdot s)\cdot\sigma_2=s$.
\item if $i=0$ and $1\not\in S_2$ then $(\nu_3\cdot s)\cdot\sigma_2\sigma_1=s$.
\end{enumerate}
\end{lemma}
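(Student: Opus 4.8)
The plan is to prove each of the four identities by unfolding, component by component, the action of the prescribed word on the subset‑construction automaton $A_{\widehat\alpha}$. Throughout I would rely on two elementary facts about how a single letter $\sigma$ transforms a state $(T_1,\dots,T_\alpha)$: on each automaton $A_j$ it acts as a cycle, a transposition, a contraction, or the identity, and whenever the image of $T_{j-1}$ meets $F_{j-1}=\{q_f\}$ the initial state $0$ is added to the $j$-th component (this is exactly the extra transition built into Definition \ref{autoConcat}, i.e.\ property \emph{P3}). Since the forward cycle on $A_j$ inverts the backward cycle $p$, the transposition is its own inverse, and the contraction sends $1$ to $0$ and fixes everything else, the backward operations appearing in $\tau_k\cdot s$ and $\nu_k\cdot s$ are precisely cancelled by the forward letters. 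The repeated use of \emph{P3} for the valid state $s$ will let me discard every spurious carried $0$, since such a $0$ is already present in the relevant component of $s$.

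For item~1 the word is the single letter $\sigma_{k-1}$, which is a cycle on $A_{k-2}$, a transposition on $A_{k-1}$, a contraction on $A_k$, and the identity elsewhere. Applying it to $\tau_k\cdot s$, the forward cycle restores $S_{k-2}$ from $p\cdot S_{k-2}$, the forward transposition restores $S_{k-1}$ from $t\cdot S_{k-1}$, and the contraction sends $S_k\cup\{1\}$ to $S_k$ because $1\notin S_k$ and $0\in S_k$; every carry produced only re‑adds a $0$ that \emph{P3} guarantees is already in the next component of $s$, so the result is $s$. Item~3 is the same computation for $k=3$ with word $\sigma_2$, using $i>0$ so that the forward cycle on $A_1$ sends $p\cdot i=i-1$ back to $i$. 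Item~4 runs the two‑letter word $\sigma_2\sigma_1$: reading $\sigma_2$ turns the first component into $\{1\}$ and restores the others as above, after which $\sigma_1$ transposes $\{1\}$ back to $\{0\}=i$ while its contraction on $A_2$ fixes $S_2$ because $1\notin S_2$.

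Item~2 is the substantial case and the main obstacle. I would split $w=\sigma_{k-1}(\sigma_{k-2})^{m+1}$, with $m=\min(S_{k-3})$, into its first letter and the block $(\sigma_{k-2})^{m+1}$. The first letter restores components $k-2,k-1,k$ exactly as in item~1; the decisive point is that $p^{m+1}$ shifts $S_{k-3}$ down by $m+1$, so its minimum $m$ lands on $q_f$ and reading $\sigma_{k-1}$ carries a $0$ into component $k-2$, restoring $0\in S_{k-2}$ (here the hypothesis $0\in S_{k-2}$ is used). It then remains to analyse the block $(\sigma_{k-2})^{m+1}$, which cycles $A_{k-3}$, transposes $A_{k-2}$, and contracts $A_{k-1}$; the whole difficulty is the bookkeeping of the carries generated by the $m+1$ successive cycles of $A_{k-3}$. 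The clean observations that make it work are: (i) cycling $A_{k-3}$ up by $m+1$ inverts $p^{m+1}$ and returns component $k-3$ to $S_{k-3}$, the carries into that component being controlled by the validity clause $q_f\notin S_{k-4}$ or $1\in S_{k-3}$ of $\nu_k\cdot s$ (indeed if $q_f\in S_{k-4}$ then \emph{P3} forces $m=0$); (ii) the transposition $(0,1)$ fixes $S_{k-2}$ because both $0$ and $1$ belong to $S_{k-2}$ (from $0\in S_{k-2}$ and the validity clause $1\in S_{k-2}$), and the $0$'s carried from $A_{k-3}$ are absorbed; (iii) the contraction fixes $S_{k-1}$ because $1\notin S_{k-1}$, the carries from $A_{k-2}$ being again absorbed by \emph{P3}; and (iv) components of index $\geq k$ or $<k-3$ are untouched up to carried $0$'s already present in $s$. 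Combining the two stages yields $(\nu_k\cdot s)\cdot w=s$.

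I expect the only delicate points to be this carry accounting in item~2, for which the validity of $\nu_k\cdot s$ (Lemma \ref{valid states}) supplies precisely the two clauses needed, and the smallest automata, which are handled by the convention of Remark \ref{states} that components of index $<1$ are empty. All four identities then follow by the same mechanical unfolding, differing only in the length of the word and in which of the three local operations—cycle, transposition, contraction—has to be inverted.
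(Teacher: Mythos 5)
Your proposal is correct and takes essentially the same route as the paper's proof: unfold the word letter by letter, observe that the forward cycle/transposition/contraction actions of $\sigma_{k-1}$ (and then $(\sigma_{k-2})^{\min(S_{k-3})+1}$, resp.\ $\sigma_2\sigma_1$) invert the backward operations built into $\tau_k$ and $\nu_k$, and absorb every carried $0$ using property \textit{P3} of $s$ together with the validity clauses of Lemma \ref{valid states}. Your carry accounting for item 2 --- in particular that the minimum of $p^{\min(S_{k-3})+1}\cdot S_{k-3}$ lands on $q_f$, so that reading $\sigma_{k-1}$ carries the missing $0$ back into $S_{k-2}$ --- makes explicit a point the paper's much terser proof leaves implicit.
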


\begin{proof}
\begin{enumerate}
\item It is clear from the definition of  $\tau_k$ that its action is cancelled  by a transition labelled 
  $\sigma_{k-1}$.
\item Let us first notice, as for $\tau_k$, that $\sigma_{k-1}$ cancels the action of $\nu_k$ on the sets $S_{k-2}$,
  $S_{k-1}$ and $S_k$. Then, $(\sigma_{k-2})^{\mathrm{min}(S_{k-3})+1}$ cancels the action of $\nu_k$ on $S_{k-3}$ and does not modify
  $S_{k-2}$ and $S_{k-1}$ if $1\in S_{k-2}$ (which is true according to ($P_{\nu}$)) and $1\not\in S_{k-1}$ (which is true by hypothesis).
\item If $i>0$, $\sigma_2$ cancels the action of $\nu_3$.
\item If $i=0$, $\sigma_2$ cancels the action of  $\nu_3$ on $S_3$ and $S_2$ but sends $i$ in $1$. The action of $\sigma_1$ allows us to send $i$ in $0$ without modifying $S_3$ and $S_2$ (because $1\not\in S_2$).
\end{enumerate}
\end{proof}

In the following lemma, let us write $\Sigma_k=\{\sigma_1,\ldots ,\sigma_k\}$.

\begin{lemma}\label{theLemme}
When $k\geq2$, for any valid state $s=(i,S_2,...,S_{k-1},S_k,S_{k+1},...,S_{\alpha})$ such that $0\in S_k$ and $1\not\in S_k$ there exists a couple $(u,v)\in(\Pi^*,(\Sigma_{k-2})^*)$ such that $s'=u\cdot s$ is a valid state of the form $(i',S_2',...,S_{k-1}',S_k\cup\{1\},S_{k+1},...,S_{\alpha})$ and $s'\cdot\sigma_{k-1}v=s$.
\end{lemma}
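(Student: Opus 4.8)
The plan is to argue by induction on $k$. For the base cases $k=2$ and $k=3$ I would invoke Remark~\ref{states}: when $k=2$ the state $\tau_2\cdot s$ is always valid, so $u=\tau_2$ and $v=\varepsilon$ work and part~1 of Lemma~\ref{mot de retour} gives $s'\cdot\sigma_1=s$; when $k=3$ I would use $\tau_3$ if it is valid (the condition being ``$i\neq0$ or $1\in S_2$'') and otherwise $\nu_3$, whose return word $\sigma_2$ or $\sigma_2\sigma_1$ from Lemma~\ref{mot de retour} lies in $(\Sigma_1)^*\subseteq(\Sigma_{k-2})^*$. For the inductive step the first move is to test $\tau_k$: if the two parentheses of $P_{\tau}$ in Lemma~\ref{valid states} both hold, then $s'=\tau_k\cdot s$ is valid, has the announced shape $(i,S_2,\dots,p\cdot S_{k-2},t\cdot S_{k-1},S_k\cup\{1\},S_{k+1},\dots,S_\alpha)$, and part~1 of Lemma~\ref{mot de retour} yields $s'\cdot\sigma_{k-1}=s$; hence $u=\tau_k$ and $v=\varepsilon$ suffice.

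The whole work is the case where $\tau_k\cdot s$ fails to be valid, and here the point is that validity of $s$ itself pins down the offending configuration. If the second parenthesis of $P_{\tau}$ fails, i.e. $q_f\in S_{k-3}$ and $1\notin S_{k-2}$, then property \textit{P3} forces $0=i_{k-2}\in S_{k-2}$, so $s$ meets the hypotheses of the present lemma at index $k-2$ (namely $0\in S_{k-2}$ and $1\notin S_{k-2}$); I would apply the induction hypothesis there to obtain a predecessor word that inserts $1$ into $S_{k-2}$ without touching $S_{k-1},S_k,\dots,S_\alpha$, after which the second parenthesis holds. If instead the first parenthesis fails, i.e. $0\in S_{k-2}$ and $1\notin S_{k-1}$, I would branch on whether $1\in S_{k-2}$: when it holds, $\nu_k$ is available (its hypothesis $P_{\nu}$ is precisely $1\in S_{k-2}$ together with a constraint on $S_{k-4},S_{k-3}$ that is again supplied by \textit{P3} or repaired by one more call of the induction hypothesis at index $k-3$), and part~2 of Lemma~\ref{mot de retour} provides the return word $\sigma_{k-1}(\sigma_{k-2})^{\min(S_{k-3})+1}$; when $1\notin S_{k-2}$, I would first run the index-$(k-2)$ step above to bring $1$ into $S_{k-2}$ and then land in the $\nu_k$ situation.

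In every branch the word $u$ is assembled by prepending the transformation word produced by the induction hypothesis (at index $k-2$, and possibly $k-3$) to a single trailing $\tau_k$ or $\nu_k$, and $v$ is assembled by appending the matching return words. The essential sanity check is that $v$ remains in $(\Sigma_{k-2})^*$: a recursive call at index $j\le k-1$ returns a word over $\Sigma_{j-2}\subseteq\Sigma_{k-2}$, the extra factor $(\sigma_{k-2})^{\min(S_{k-3})+1}$ attached to a $\nu_k$ step also lies in $\Sigma_{k-2}$, and the single explicit $\sigma_{k-1}$ is exactly the letter separated out in the statement. Validity of each intermediate state is read directly from Lemma~\ref{valid states}, whose conditions $P_{\tau}$ and $P_{\nu}$ are exactly what the preceding case analysis guarantees. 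I expect the genuine obstacle to be the bookkeeping of this chained recursion: one must verify that inserting $1$ into $S_{k-2}$, together with the accompanying moves on $S_{k-3}$ and $S_{k-1}$, does not break the parentheses of $P_{\tau}$ or $P_{\nu}$ needed for the final $\tau_k$ or $\nu_k$ step, and that the concatenated return word reconstructs $s$ coordinate by coordinate through the telescoping cancellations recorded in Lemma~\ref{mot de retour}.
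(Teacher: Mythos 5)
Your proposal takes essentially the same route as the paper's proof: induction on $k$, applying $\tau_k$ (return letter $\sigma_{k-1}$) when both parentheses of $P_{\tau}$ hold, applying $\nu_k$ (return word $\sigma_{k-1}(\sigma_{k-2})^{\min(S_{k-3})+1}$) when $0\in S_{k-2}$ and $1\notin S_{k-1}$, and repairing via recursive calls at indices $k-2$ and $k-3$ to insert $1$ into $S_{k-2}$ or $S_{k-3}$, with the same composition of words and the same alphabet check on $v$; your branches correspond exactly to the paper's cases (B1)/(B2), (IH1), (B3)/(IH2) and (IH3)/(B4). The only difference is organizational: you branch on which parenthesis of $P_{\tau}$ fails, whereas the paper branches first on whether $1\in S_{k-1}$, and the paper carries out explicitly the bookkeeping you defer.
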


\begin{proof}
By induction on $k$. Multiple cases can appear. Someones, labeled (B), are bases cases. Other ones, labeled (IH), use the induction hypothesis.
\begin{itemize}[align=left]
\item If $1\in S_{k-1}$ we distinguish two cases:
	\begin{itemize}[align=left]
	\item[$\bullet$ (B1)] If $q_f\not\in S_{k-3}$ or $1\in S_{k-2}$, the desired couple $(u,v)$ is $(\tau_k,\varepsilon)$. Indeed,
	  $\tau_k\cdot s$ is a valid state by $(P_{\tau})$, of the announced form by definition of $\tau_k$ and $(\tau_k\cdot
	  s)\cdot\sigma_{k-1}=s$ by lemma \ref{mot de retour}.
	\item[$\bullet$ (IH1)] If $q_f\in S_{k-3}$ (which implies $0\in S_{k-2}$) and $1\not\in S_{k-2}$ then, by induction
	  hypothesis, there exists a couple $(u',v')\in(\Pi^*,(\Sigma_{k-4})^*)$ such that $s''=u'\cdot s$ is a valid state of the form
	  $(i'',S_2'',...,S_{k-3}'',S_{k-2}\cup\{1\},S_{k-1},...,S_{\alpha})$ and $s''\cdot\sigma_{k-3}v'=s$. Since $S_{k-1}''=S_{k-1}$ and $1\in
	  S_{k-2}''$, we are taken back to point (B1) and the desired couple $(u,v)$ is obtained by composing $(\tau_k,\varepsilon)$
	  with $(u',v')$, which gives $(\tau_ku',\sigma_{k-3}v')$.
	\end{itemize}
\item[\textbf{--} (B2)] If $1\not\in S_{k-1}$ and $0\not\in S_{k-2}$ (and so $q_f\not\in S_{k-3}$) then we follow a similar reasoning
  to the one used for case (B1) to find $(u,v)=(\tau_k,\varepsilon)$.
\item If $1\not\in S_{k-1}$ and $0,1\in S_{k-2}$ we distinguish two cases:
	\begin{itemize}[align=left]
	\item[$\bullet$ (B3)] If $q_f\not\in S_{k-4}$ or $1\in S_{k-3}$ then the desired couple $(u,v)$ is
	  $(\nu_k,(\sigma_{k-2})^{j_0+1})$, where
		\[
		  j_0 = \left\{
	    \begin{array}{ll}
		    \mathrm{min}(S_{k-3}) & \ \mathrm{if}\ k>3\\
		    -1                    & \ \mathrm{if}\ k=3\ \mathrm{and}\ i>0\\
		    0                     & \ \mathrm{if}\ k=3\ \mathrm{and}\ i=0
	    \end{array}
	    \right.
	  \]
	  Indeed, $\nu_k\cdot s$ is a state by $(P_{\nu})$, of the announced form by the definition of $\nu_k$ and
	  $(\nu_k\cdot s)\cdot\sigma_{k-1}(\sigma_{k-2})^{j_0+1}=s$ by lemma \ref{mot de retour}.
	\item[$\bullet$ (IH2)] If $q_f\in S_{k-4}$ (which implies $0\in S_{k-3}$) and $1\not\in S_{k-3}$ then, by induction
	  hypothesis, there exists a couple $(u',v')\in(\Pi^*,(\Sigma_{k-5})^*)$ such that $s''=u'\cdot s$ is a valid state of the form
	  $(i'',S_2'',...,S_{k-4}'',S_{k-3}\cup\{1\},S_{k-2},...,S_{\alpha})$ and $s''\cdot\sigma_{k-4}v'=s$. Since
	  $S_{k-1}''=S_{k-1}$, $S_{k-2}''=S_{k-2}$ and $1\in S_{k-3}''$, we are taken back to point (B3) (with $j_0=0$) and the desired 
		couple $(u,v)$ is obtained by composing $(\nu_k,\sigma_{k-2})$ with $(u',v')$, which gives
	  $(\nu_ku',\sigma_{k-2}\sigma_{k-4}v')$.
	\end{itemize}
\item If $1\not\in S_{k-1}$, $0\in S_{k-2}$ and $1\not\in S_{k-2}$, we distinguish two cases:
	\begin{itemize}[align=left]
	\item[$\bullet$ (IH3)] If $k>3$ then by induction hypothesis, there exists a couple
	  $(u',v')\in(\Pi^*,(\Sigma_{k-4})^*)$ such that $s''=u'\cdot s$ is a valid state of the form
	  $(i'',S_2'',...,S_{k-3}'',S_{k-2}\cup\{1\},S_{k-1},...,S_{\alpha})$ and $s''\cdot\sigma_{k-3}v'=s$ (remark: if
	  $k=4$ then $v'=\varepsilon$). Since $S_{k-1}''=S_{k-1}$ and $0,1\in S_{k-2}''$, we are taken back to previous cases
	  ((B3) or (IH2) according to $S_{k-3}$ and $S_{k-4}$) which allow to find a couple $(u'',v'')$ such that $u''\cdot s''$
	  is a valid state and $(u''\cdot s'')\cdot\sigma_{k-1}v''=s''$. The desired couple $(u,v)$ is obtained by composing
	  $(u'',v'')\in(\Pi^*,(\Sigma_{k-2})^*)$ with $(u',v')$, which gives $(u''u',v''\sigma_{k-3}v')$.
	\item [$\bullet$ (B4)] If $k=3$ then the desired couple $(u,v)$ is $(\nu_3,\varepsilon)$ or $(\nu_3,\sigma_1)$ depending on whether
	  $i>0$ or $i=0$. Indeed, $\nu_3\cdot s$ is always a valid state (by lemma \ref{valid states}) of the announced form
	  by definition of $\nu_3$ and $(\nu_3\cdot s)\cdot\sigma_2v=s$ by lemma \ref{mot de retour} (applicable, because $1\not\in S_{k-1}$,
		i.e. $1\not\in S_2$).
	\end{itemize}
\end{itemize}
\end{proof}

\begin{corollary}\label{theCorollaire}
For any valid state $s=(i,S_2,...,S_{\alpha-1},\{0\})$ such that $q_f\not\in S_{\alpha-1}$ there exists a valid state $s'=(i',S_2',...,S_{\alpha-1}',\{1\})$ and a word $w$ such that $s'.\sigma_{\alpha-1}w=s$.
\end{corollary}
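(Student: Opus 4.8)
The plan is to deduce Corollary~\ref{theCorollaire} directly from Lemma~\ref{theLemme} by specializing it to the last automaton, that is, to the index $k=\alpha$. First I would observe that the state $s=(i,S_2,\ldots,S_{\alpha-1},\{0\})$ satisfies exactly the hypotheses required to invoke Lemma~\ref{theLemme} with $k=\alpha$: its last component is $S_\alpha=\{0\}$, so indeed $0\in S_\alpha$ and $1\notin S_\alpha$. The side condition $q_f\notin S_{\alpha-1}$ stated in the corollary is what guarantees that $s$ is a \emph{valid} state with the correct final-state behaviour, since property \textit{P3} would otherwise force $i_\alpha=0\in S_\alpha$ together with additional structure; having $q_f\notin S_{\alpha-1}$ ensures no such propagation obstructs the application.

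Applying Lemma~\ref{theLemme} with $k=\alpha$ then yields a couple $(u,v)\in(\Pi^*,(\Sigma_{\alpha-2})^*)$ and a valid state $s'=u\cdot s$ of the form $(i',S_2',\ldots,S_{\alpha-1}',S_\alpha\cup\{1\},S_{\alpha+1},\ldots)$. Since $\alpha$ is the last index, there are no components beyond $S_\alpha$, and $S_\alpha\cup\{1\}=\{0\}\cup\{1\}=\{1\}$ is \emph{not} quite what the corollary asks for. So I would need to reconcile the two shapes: the corollary wants the last component to be exactly $\{1\}$, whereas the lemma delivers $S_\alpha\cup\{1\}=\{0,1\}$. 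The reconciliation is the point that needs care, and I expect it to be the main obstacle.

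To resolve this, I would set $w=v$ and take $\sigma_{\alpha-1}w=\sigma_{\alpha-1}v$ as the returning word furnished by the lemma, which gives $s'\cdot\sigma_{\alpha-1}v=s$ immediately. The remaining task is to argue that the last component of $s'$ can be taken to be $\{1\}$ rather than $\{0,1\}$. Here I would use that $S_\alpha=\{0\}$ is a singleton together with the fact that the transformations $\tau_\alpha$ and $\nu_\alpha$ act on $S_\alpha$ only by adjoining $1$; tracking the action back through the returning word, the transition $\sigma_{\alpha-1}$ must restore $S_\alpha$ to $\{0\}$, which is only consistent if the intermediate last component is $\{1\}$ alone (any stray $0$ would survive the $\sigma_{\alpha-1}$ transition and contradict $s_\alpha=\{0\}$). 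Thus the genuine predecessor state produced has last component $\{1\}$, matching the required form $(i',S_2',\ldots,S_{\alpha-1}',\{1\})$.

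The cleanest way to phrase all this is simply to note that the corollary is the instance $k=\alpha$ of the lemma after the harmless identification of the final component, so the proof reduces to a one-line invocation once the matching of shapes is justified. The only subtlety, and the step I would write out most carefully, is verifying that $q_f\notin S_{\alpha-1}$ indeed certifies validity of $s$ under property \textit{P3} and that the returned state's last block is exactly $\{1\}$; everything else is a direct transcription of Lemma~\ref{theLemme}.
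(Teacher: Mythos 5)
Your starting point coincides with the paper's: instantiate Lemma \ref{theLemme} at $k=\alpha$, which produces a valid state $s''=(i'',S_2'',\ldots,S_{\alpha-1}'',\{0,1\})$ and a word $v\in(\Sigma_{\alpha-2})^*$ with $s''\cdot\sigma_{\alpha-1}v=s$, and you correctly identify that the mismatch between the delivered last block $\{0,1\}$ and the required $\{1\}$ is the crux. But your resolution of that mismatch is wrong. You argue that the intermediate last component must already be $\{1\}$ alone because ``any stray $0$ would survive the $\sigma_{\alpha-1}$ transition and contradict $S_\alpha=\{0\}$.'' This is false: $\sigma_{\alpha-1}$ acts on $A_\alpha$ as the contraction sending $1$ to $0$ and fixing every other state, so both $\{1\}\cdot\sigma_{\alpha-1}=\{0\}$ and $\{0,1\}\cdot\sigma_{\alpha-1}=\{0\}$. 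A stray $0$ survives but produces no contradiction whatsoever, and the state furnished by the lemma genuinely has last block $\{0,1\}$, since the transformations $\tau_\alpha$ and $\nu_\alpha$ adjoin $1$ to $S_\alpha$ without removing $0$. So the state you claim the lemma hands you does not exist as such, and no amount of ``tracking the action back'' will make it appear.

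The correct repair --- which is what the paper does --- is to build a new sequence $s'$ from $s''$ by replacing $\{0,1\}$ with $\{1\}$, observe that $s'\cdot\sigma_{\alpha-1}=s''\cdot\sigma_{\alpha-1}$ (again because the contraction merges $0$ and $1$), and hence $s'\cdot\sigma_{\alpha-1}v=s$ with $w=v$. The non-trivial point, which your proposal never addresses, is that this modified $s'$ must be shown to be \emph{valid}: property \textit{P3} at index $\alpha-1$ forbids $q_f\in S_{\alpha-1}''$ when the last block does not contain $0$. This is exactly where the hypothesis $q_f\not\in S_{\alpha-1}$ enters: since $v\in(\Sigma_{\alpha-2})^*$ and these letters (as well as $\sigma_{\alpha-1}$) fix $q_f$ in $A_{\alpha-1}$, one has $q_f\in S_{\alpha-1}''$ if and only if $q_f\in S_{\alpha-1}$, so $q_f\not\in S_{\alpha-1}''$ and $s'$ is indeed valid. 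In your write-up the hypothesis $q_f\not\in S_{\alpha-1}$ is instead spent on certifying validity of $s$ itself, which is both unnecessary (since $0\in S_\alpha$, property \textit{P3} holds at index $\alpha-1$ regardless) and beside the point. So the proposal has a genuine gap: the key validity check for the modified state is missing, and the argument offered in its place is incorrect.
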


\begin{proof}
By previous lemma, there exists a state $s''=(i'',S_2'',...,S_{\alpha-1}'',\{0,1\})$ and a word $v\in(\Sigma_{\alpha-2})^*$ such that $s''\cdot\sigma_{\alpha-1}v=s$. It is easy to see, because of the alphabet of $v$, that $q_f\in S_{\alpha-1}''$ if and only if $q_f\in S_{\alpha-1}$. Let us denote $s'''$ the valid state such that $s''\cdot\sigma_{\alpha-1}=s'''$. Since $q_f\not\in S_{\alpha-1}''$, the sequence $(i'',S_2'',...,S_{\alpha-1}'',\{1\})$ is a valid state. Furthermore, we notice that this state verifies: $(i'',S_2'',...,S_{\alpha-1}'',\{1\}).\sigma_{\alpha-1}=s'''$. So, we deduce the desired state is $s'=(i'',S_2'',...,S_{\alpha-1}'',\{1\})$ and $w=v$.
\end{proof}

\begin{proposition}\label{accessibilité}
Any valid state $s=(i,S_2,...,S_{\alpha})$ is accessible.
\end{proposition}

\begin{proof}
By induction on $\alpha$, the base case being when $\alpha=1$. In this case, the proposition is trivially verified since $A_1$ is a minimal DFA. Now, by induction hypothesis, we know that each valid state of the form $t=(i,S_2,...,S_{\alpha-1})$ is accessible. As we will see, if $q_f\in S_{\alpha-1}$, one can suppose verified the property (${\cal P}$) stating that $t$ is reached in the following way: $(0,\emptyset,...,\emptyset)\xrightarrow{\lambda_1}t_1\xrightarrow{\lambda_2}t_2 ...\xrightarrow{\sigma_{\alpha}}t_k ...\xrightarrow{\lambda_{\ell}}t$ with $\forall {t_i}_{(i<k)}, q_f\not\in S_{\alpha}$ et $\forall j>k, \lambda_j\neq\sigma_{\alpha}$ (intuitively, when we reach the final state of $A_{\alpha-1}$ we no longer permute on this automaton). So, the induction hypothesis allows to suppose accessible, any valid state of the form $(i,S_2,...,S_{\alpha-1},\emptyset)$, as well as any valid state of the form $(i,S_2,...,S_{\alpha-1},\{0\})$ with $q_f\in S_{\alpha-1}$. To prove the accessibility of $s$, we follow a second induction based on the following partial order, defined on the possible sets $S_{\alpha}=\{k_0,k_1,...\}$:
\begin{tabbing}
	$S_{\alpha}<S_{\alpha}'$ if \=$|S_{\alpha}|<|S_{\alpha}'|$ or\\
	 	                          \>$|S_{\alpha}|=|S_{\alpha}'|$ and $k_0<k_0'$ or\\
	  	                        \>$|S_{\alpha}|=|S_{\alpha}'|$, $k_0=k_0'$ and $k_1<k_1'$.
\end{tabbing}
We first prove that each state is accessible when $S_{\alpha}$ is a singleton. We set $S_{\alpha-1}=\{j_0,j_1,...\}$ and $S_{\alpha}=\{k_0\}$.
\begin{itemize}
\item If $j_0>0$ then $q_f\not\in S_{\alpha-2}$. The state
  $s'=(i,S_2,\ldots,S_{\alpha-2},(\sigma_{\alpha})^{j_0+1}\cdot S_{\alpha-1},\{0\})$ is accessible by induction hypothesis (because $q_f\in
  (\sigma_{\alpha})^{j_0+1}.S_{\alpha-1}$). It is easy to verify that $s'$ is valid (mainly because $q_f\not\in S_{\alpha-2}$).
	If $j_0$ is odd then $s'\xrightarrow{(\sigma_{\alpha})^{j_0+1}(\sigma_{\alpha+1})^{k_0}}s$. If $j_0$ is even then
  $s'\xrightarrow{(\sigma_{\alpha})^2\sigma_{\alpha+1}(\sigma_{\alpha})^{j_0-1}(\sigma_{\alpha+1})^{k_0}}s$.
\item If $j_0=0$ and $k_0>0$, we distinguish two cases:
	\begin{itemize}
	\item If $q_f\not\in S_{\alpha-2}$ or $j_1=1$ then the state $s'=(i,S_2,\ldots,S_{\alpha-2},\sigma_{\alpha}\cdot S_{\alpha-1},\{0\})$ is
	  valid and accessible by induction hypothesis (because $q_f\in \sigma_{\alpha}\cdot S_{\alpha-1}$) and
	  $s'\xrightarrow{\sigma_{\alpha}(\sigma_{\alpha+1})^{k_0-1}}s$.
  \item If $q_f\in S_{\alpha-2}$ (and so $0\in S_{\alpha-1}$) and $1\not\in S_{\alpha-1}$ then by Lemma
    \ref{theLemme}, there exists a valid state $s'=(i',S_2',\ldots,S_{\alpha-2}',S_{\alpha-1}\cup\{1\},\{k_0\})$ and a word $v$
		such that $s'\xrightarrow{v}s$. And $s'$ is accessible following the previous point.
	\end{itemize}
\item If $j_0=0$ and $k_0=0$ then, either $q_f\in S_{\alpha-1}$ and $s$ is accessible by induction hypothesis, or
  $q_f\not\in S_{\alpha-1}$ and, by Corollary \ref{theCorollaire}, there exists a valid state
  $s'=(i',S_2',\ldots,S_{\alpha-1}',\{1\})$ and a word $w$ such that $s'\xrightarrow{w}s$. And $s'$ is
  accessible following one of the previous points.
\end{itemize}
Now, we look at the case where $S_{\alpha}$ contains at least two states.
\begin{itemize}
\item If $k_0>0$ then the state $s'=(i,S_2,\ldots,S_{\alpha-1},\sigma_{\alpha+1}\cdot S_{\alpha})$ is valid and accessible by induction hypothesis
  (we have decreased by $1$ the first index of $S_{\alpha}$) and $s'\cdot\sigma_{\alpha+1}=s$.
\item If $k_0=0$ and $k_1=1$, we distinguish two cases:
	\begin{itemize}
	\item If $q_f\not\in S_{\alpha-2}$ or $1\in S_{\alpha-1}$ then the state
	  $s'=(i,S_2,\ldots,S_{\alpha-2},(\sigma_{\alpha})^{j_0+1}\cdot
		S_{\alpha-1},\sigma_{\alpha+1}\cdot(S_{\alpha}\setminus\{0\}))$ is valid and accessible by induction hypothesis (the last
		set contains one less state than $S_{\alpha}$) and $s'\xrightarrow{\sigma_{\alpha+1}(\sigma_{\alpha})^{j_0+1}}s$.
	\item If $q_f\in S_{\alpha-2}$ (which implies $0\in S_{\alpha-1}$) and $1\not\in S_{\alpha-1}$ then, by Lemma \ref{theLemme},
	  there exists a valid state $s'=(i',S_2',\ldots,S_{\alpha-2}',S_{\alpha-1}\cup\{1\},S_{\alpha})$ and a word $w$ such that
    $s'\xrightarrow{w}s$. And $s'$ is accessible following the previous point. 
	\end{itemize}
\item If $k_0=0$ and $k_1>1$ then, by Lemma \ref{theLemme}, there exists a valid state
    $s'=(i',S_2',\ldots,S_{\alpha-1}',S_{\alpha}\cup\{1\})$ and a word $w$ such that
    $s'\xrightarrow{w}s$. And $s'$ is accessible following the previous case.
\end{itemize}
One can verify that, in each of the considered cases, we never act in a final valid state  (\textit{i.e.} ($S_1,\ldots, S_\alpha)$ with $q_f\in S_\alpha$)  with a  $\sigma_{\alpha+1}$ letter. This ensures the property ${\cal P}$ announced at the beginning of the proof.
\end{proof}

From Propositions \ref{accessibilité} and \ref{séparation}, we deduce:

\begin{theorem}
The family of sequences of minimal DFAs $(A_1,\ldots,A_\alpha)_{\alpha>0}$, described in table \ref{tableau}, is a family of witnesses over an ($\alpha+1$)-letters alphabet for the catenation of $\alpha$ languages.
\end{theorem}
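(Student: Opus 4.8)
The plan is to read off the theorem as a direct consequence of the two preceding propositions together with the upper-bound analysis of Section~3. The statement carries two separate assertions: that each $A_k$ is a minimal DFA of state complexity $n_k$ (so that the tuple $(L(A_1),\ldots,L(A_\alpha))$ is a legitimate input for the $\alpha$-ary catenation state complexity at parameters $n_1,\ldots,n_\alpha$), and that the catenation $L(A_1)\cdots L(A_\alpha)$ attains the maximum state complexity over all such tuples. The alphabet size $\alpha+1$ is immediate from the construction in Table~\ref{tableau}, whose letters are $\sigma_1,\ldots,\sigma_{\alpha+1}$, so that part requires no argument.

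First I would check that each $A_k$, as drawn in Figure~\ref{Ak}, is a minimal DFA with exactly $n_k$ states. Every state is accessible because $\sigma_{k+1}$ acts as the full cycle $(0,\ldots,n_k-1)$, hence $0\cdot(\sigma_{k+1})^{j}=j$ for all $j$. Distinguishability is equally direct: the word $(\sigma_{k+1})^{n_k-1-j}$ maps state $j$ to the unique final state $n_k-1$, so no two distinct states are equivalent. Thus $\sc(L(A_k))=n_k$ and the tuple is a valid input.

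Next I would identify the accessible part of the determinized concatenation automaton $A_{\widehat{\alpha}}$ with the set $\mathcal T_\alpha$ of valid sequences. The discussion following Definition~\ref{autoConcat} already shows that every accessible state of $A_{\widehat{\alpha}}$ satisfies properties \textit{P1}--\textit{P3}, i.e.\ is valid; Proposition~\ref{accessibilit�} supplies the converse, that every valid state is accessible. Hence the accessible states of $A_{\widehat{\alpha}}$ are exactly the valid states, numbering $\#\mathcal T_\alpha$. Proposition~\ref{s�paration} then guarantees that any two distinct valid states are inequivalent, so minimization merges nothing and the minimal DFA for $L(A_1)\cdots L(A_\alpha)$ has precisely $\#\mathcal T_\alpha$ states. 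Finally, Section~3 established that $\#\mathcal T_\alpha$ is the upper bound for the state complexity of the catenation of $\alpha$ DFAs of sizes $n_1,\ldots,n_\alpha$, the maximum being reached when each factor has a single final state, which is exactly the situation of our $A_k$. Since the construction attains this bound, the tuple $(L(A_1),\ldots,L(A_\alpha))$ is a witness.

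Because both propositions are already in hand, there is no genuine obstacle left; the proof is essentially a splicing of previous results. The only care needed is bookkeeping: one must confirm that the accessible/valid identification and the single-final-state hypothesis are aligned, so that $\#\mathcal T_\alpha$ is simultaneously the exact state count of our construction and the global upper bound on the operation. The technical heavy lifting has already been absorbed into Lemma~\ref{theLemme} and Corollary~\ref{theCorollaire}, on which Proposition~\ref{accessibilit�} rests, so nothing further needs to be grinded through here.
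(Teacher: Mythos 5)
Your proof is correct and follows exactly the paper's route: the paper itself deduces this theorem directly from Proposition~\ref{accessibilit�} (every valid state is accessible) and Proposition~\ref{s�paration} (distinct valid states are inequivalent), combined with the counting bound $\#\mathcal T_\alpha$ from Section~3. Your additional verifications (minimality of each $A_k$, the single-final-state alignment) are the routine bookkeeping the paper leaves implicit, so there is no substantive difference.
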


\section{A $\alpha$-letters witness for the catenation of $\alpha$ automata: a conjecture}

In this section we propose to decrease by one the size of the alphabet used to define witnesses for multiple catenation. A $\alpha$-letters alphabet should be optimal. In any case, it is optimal when $\alpha=2$: indeed it is proven in \cite{YZS94} that state complexity for catenation of two minimal DFAs with size $m$ and $n$, and using only one letter is $mn$, which is strictly lower to the general state complexity for catenation.

Our statement is a conjecture, since we only prove it when $\alpha=2$ and $\alpha=3$. Some tests computed with the software Sage for $\alpha\in[2,7]$ and DFAs with size in $[3,6]$ also argue in this sense. Our witnesses can be obtained by slightly modifying the table of the previous section (see Table \ref{tableauBis}).

\begin{table}[h]
\centerline{  
\begin{tikzpicture}[scale=0.5,node distance=1cm,shorten >=0pt]   
  \draw (1,1)rectangle (9,9);
	\draw (1.5,9.5) node (q0) {$A_1$};
	\draw (2.5,9.5) node (q1) {$A_2$};
	\draw (3.5,9.5) node (q2) {$A_3$};
	\draw (6,9.5) node (q3) {$\ldots$};
	\draw (8.5,9.5) node (q4) {$A_\alpha$};
	\draw (0.2,8.5) node (r0) {$\sigma_1$};
	\draw (0.2,7.5) node (r1) {$\sigma_2$};
	\draw (0.2,6.5) node (r2) {$\sigma_3$};
	\draw (0.2,4) node (r3) {$\vdots$};
	\draw (0.2,1.5) node (r4) {$\sigma_\alpha$};
%
	\draw (1.5,8.5) node (q0) {$t$};
	\draw (2.5,8.5) node (q1) {$c$};
	\draw (3.5,8.5) node (q2) {$\mathds{1}$};
	\draw (3.8,8.5) -- (8.2,8.5);
	\draw (8.5,8.5) node (q4) {$\mathds{1}$};
	\draw (1.5,7.5) node (q0) {$p$};
	\draw (2.5,7.5) node (q1) {$t$};
	\draw (3.5,7.5) node (q2) {$c$};
%
	\draw (1.5,6.5) node (q0) {$\mathds{1}$};
	\draw (2.5,6.5) node (q1) {$p$};
	\draw (3.5,6.5) node (q2) {$t$};
	\draw (4.5,6.5) node (q3) {$c$};
%
  \draw [dashed] (4,6) -- (7,3);
  \draw (8.5,1.5) node (q0) {$p$};
  \draw (7.5,2.5) node (q0) {$t$};
  \draw [dashed] (3,6) -- (7,2);
  \draw (7.5,1.5) node (q0) {$p$};
  \draw [dashed] (5,6) -- (7,4);
  \draw (8.5,2.5) node (q0) {$c$};
  \draw (7.5,3.5) node (q0) {$c$};
  \draw  (1.5,6.2) -- (1.5,1.8);
  \draw  (1.8,1.5) -- (6.2,1.5);
  \draw (1.8,6.2) -- (6.2,1.8);
  \draw (1.5,1.5) node (q0) {$\mathds{1}$};
  \draw (6.5,1.5) node (q0) {$\mathds{1}$};
  \draw (8.5,3.5) node (q0) {$\mathds{1}$};
  \draw  (3.8,8.2) -- (8.2,3.8);
  \draw  (8.5,8.2) -- (8.5,3.8);
  \draw (3,3) node (q0) {$\mathds{1}$};
  \draw (7,7) node (q0) {$\mathds{1}$};
\end{tikzpicture}}
\caption{$\alpha$-letters witnesses for the multiple catenation: a conjecture}
\label{tableauBis}
\end{table}  

\subsection{The two automata case}
In its paper \cite{Brz13}, where J. Brzozowski proposes four atomic constructions to build universal witnesses, he observes a defect concerning the operation of catenation. He only suggests a 3-letters alphabet witness, whereas, in \cite{Jir05}, G. Jiraskova produces a 2-letters one.


We give here a 2-letters witness for catenation, based on the atomic constructions of J. Brzozowski, which corresponds to the previous table when $\alpha=2$ (see Table \ref{two-letters witness} and Figure  \ref{2-letters}). 

\begin{table}
\centerline{\begin{tikzpicture}[scale=0.7,node distance=1cm,shorten >=0pt]   
  \draw (3,1)rectangle (5,3);
	\draw (3.5,3.5) node (q0) {$A_1$};
	\draw (4.5,3.5) node (q1) {$A_2$};
	\draw (2.5,2.5) node (q0) {$b$};
	\draw (2.5,1.5) node (q1) {$a$};
	\draw (3.5,2.5) node (q0) {$t$};
	\draw (4.5,2.5) node (q1) {$c$};
	\draw (3.5,1.5) node (q0) {$p$};
	\draw (4.5,1.5) node (q1) {$p$};
\end{tikzpicture}}
\caption{{$2$-letters witness for catenation of two languages}}
\label{two-letters witness}
\end{table}


\begin{figure}
	\centerline{
		\begin{tikzpicture}[node distance=1.2cm, bend angle=25]
			\node[state,initial] (p0) {$p_0$};
			\node[state] (p1) [right of=p0] {$p_1$};
			\node[state] (p2) [right of=p1] {$p_2$};
			\node (etc2) [right of=p2] {$\ldots$};
			\node[state, rounded rectangle] (m-2) [right of=etc2] {$p_{m-2}$};
			\node[state, rounded rectangle, accepting] (m-1) [right of=m-2] {$p_{m-1}$};
			\path[->]
        (p0) edge[bend left] node {a,b} (p1)
        (p1) edge[bend left] node {a} (p2)
        (p2) edge[bend left] node {a} (etc2)
        (etc2) edge[bend left] node {a} (m-2)
        (m-2) edge[bend left] node {a} (m-1)
        (m-1) edge[out=-115, in=-65, looseness=.2] node[above] {a} (p0)
		    (p2) edge[out=115,in=65,loop] node {b} (p2)
		    (m-2) edge[out=115,in=65,loop] node {b} (m-2)
		    (m-1) edge[out=115,in=65,loop] node {b} (m-1)
        (p1) edge[bend left] node[above] {b} (p0)
			;
			\node (concon) [right of=m-1]{};
			\node[state,initial] (q0)  [right of=concon]{$q_0$};
			\node[state] (q1) [right of=q0] {$q_1$};
			\node[state] (q2) [right of=q1] {$q_2$};
			\node (etc2) [right of=q2] {$\ldots$};
			\node[state, rounded rectangle] (n-2) [right of=etc2] {$q_{n-2}$};
			\node[state, rounded rectangle, accepting] (n-1) [right of=n-2] {$q_{n-1}$};
			\path[->]
        (q0) edge[bend left] node {a} (q1)
        (q1) edge[bend left] node {a} (q2)
        (q2) edge[bend left] node {a} (etc2)
        (etc2) edge[bend left] node {a} (n-2)
        (n-2) edge[bend left] node {a} (n-1)
        (n-1) edge[out=-115, in=-65, looseness=.2] node[above] {a} (q0)
		    (q0) edge[out=115,in=65,loop] node {b} (q0)
		    (q2) edge[out=115,in=65,loop] node {b} (q2)
		    (n-2) edge[out=115,in=65,loop] node {b} (n-2)
		    (n-1) edge[out=115,in=65,loop] node {b} (n-1)
        (q1) edge[bend left] node[above] {b} (q0)
			;
    \end{tikzpicture}
  }
  \caption{The witness described by Table \ref{two-letters witness}.}
	\label{2-letters}
\end{figure}

Following Definition \ref{autoConcat}, we add transitions from the predecessor of the final state in the first DFA to the initial state of the second DFA and apply the subset construction to the resulting NFA. 
The valid states of this automaton, named $A$ in the following, are of the form $(p_i,S)$, where $S$ denotes any subset of $\{q_0,...,q_{n-1}\}$ (containing $q_0$ if $i=m-1$). The number of valid states  is equal to the state complexity of catenation, that is $m2^n-2^{n-1}$. We prove that all these states are both accessible and pairwise non-equivalent.

\begin{proposition}\label{2-accessibility}
Each valid state $(p_i,S)$ of $A$  is accessible.
\end{proposition}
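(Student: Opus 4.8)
The plan is to proceed by induction on the size $|S|$ of the second component, after first recording how the two letters act on a valid state $(p_i,S)$. Writing $\rho(S)=\{q_{(j+1)\bmod n}\mid q_j\in S\}$ for the cyclic shift induced by $a$ on the second automaton, one has $(p_i,S)\cdot a=(p_{i+1},\rho(S))$ for $i\neq m-2$, whereas the single \emph{injecting} transition gives $(p_{m-2},S)\cdot a=(p_{m-1},\rho(S)\cup\{q_0\})$; similarly $b$ swaps $p_0$ and $p_1$ and replaces $q_1$ by $q_0$ in $S$, so that $(p_1,S)\cdot b=(p_0,S)$ whenever $q_1\notin S$. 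The initial state is $(p_0,\emptyset)$, and the base case $S=\emptyset$ is immediate, since $(p_0,\emptyset)\cdot a^i=(p_i,\emptyset)$ for $0\le i\le m-2$, and these are exactly the valid states with empty second component.

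For the inductive step, assume every valid state whose second component has size at most $k-1$ is accessible, and fix a valid state with $|S|=k$. The heart of the argument is to reach first the states sitting on the final state $p_{m-1}$ of $A_1$. Such a state $(p_{m-1},S)$ is valid only when $q_0\in S$, and it is precisely the image under $a$ of $(p_{m-2},\rho^{-1}(S\setminus\{q_0\}))$, a valid state whose second component has size $k-1$ and which is therefore accessible by the induction hypothesis. This uses the injecting transition essentially and is the only place where the size of the second component is genuinely increased.

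It then remains to reach the states $(p_i,S)$ with $i\le m-2$. Reading $a$ backwards, which shifts both components cyclically and never injects when the target is not $p_{m-1}$, reduces this to the case $i=0$, i.e. to reaching every $(p_0,T)$ with $|T|=k$. If $q_1\in T$, then $(p_0,T)=(p_{m-1},\rho^{-1}(T))\cdot a$, and $(p_{m-1},\rho^{-1}(T))$ is valid (it contains $q_0$) and has already been reached above. If $q_1\notin T$, I use $(p_1,T)\cdot b=(p_0,T)$ together with $(p_0,\rho^{-1}(T))\cdot a=(p_1,T)$ to rotate $T$ backwards to $\rho^{-1}(T)$ while staying over $p_0$; iterating this rotation, which is legitimate as long as the current set avoids $q_1$, one eventually reaches a rotate of $T$ containing $q_1$ (this happens within $n$ steps since $T\neq\emptyset$), and we are reduced to the previous case.

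The main obstacle, and the reason a naive element-by-element construction fails, is the coupling between the two components: the letter $a$ cannot shift $S$ without simultaneously advancing the state of $A_1$, and a new element can be placed into $S$ only through the unique injecting transition into $p_{m-1}$, which forces both a shift of $S$ and the presence of $q_0$. The delicate points are therefore the injection step that reaches the states over $p_{m-1}$, together with the terminating rotation argument handling target sets that do not contain $q_1$; one must also verify throughout that the validity constraint $q_0\in S$ for states over $p_{m-1}$ is respected, and treat separately the degenerate situations, such as $m=2$ (where the cycle and the transposition coincide on $A_1$, so that $p_0\to p_1$ is itself an injecting transition) and very small values of $n$.
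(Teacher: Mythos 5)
Your proof is correct and follows essentially the same route as the paper's: induction on $|S|$, with the injecting transition $p_{m-2}\to p_{m-1}$ as the only step that grows the second component, reduction of the states over $p_i$ ($i\le m-2$) to those over $p_0$ by powers of $a$, and the $(ab)$-rotation to handle target sets avoiding $q_1$ (the paper's word $(ab)^{j_1-1}$ is exactly your iterated rotation). If anything, you are slightly more careful than the paper, which silently assumes $m\ge 3$ and never mentions the degenerate case $m=2$, where the cycle and the transposition coincide on $A_1$ and every transition into $p_1$ injects $q_0$.
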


\begin{proof}
By induction on the size of $S$. First, any state of the form $(p_i,\emptyset)$  with $i\neq m-1$ is accessible from the initial state $(p_0,\emptyset)$ by the word $a^i$.

Now, consider any state $s=(p_i,S)$ with $|S|=k$ for some integer $k>0$. We proceed by cases :
\begin{enumerate}
	\item\label{point1} If $i=m-1$ then $q_0\in S$ and $s$ is reached by $a$ from $(p_{m-2},a.(S\backslash\{q_0\}))$, which is accessible by induction hypothesis.
	\item\label{point2} If $i=0$ and $q_1\in S$,  $s$ is reached by $a$ from $(m-1,a.S)$ which is accessible by point \ref{point1}.
	\item\label{point3} If $i=0$ and $q_1\not\in S$. Let us set $q_1<\ldots<q_{n-1}<q_n=q_0$ and $S$ be an ordered set with $S=\{q_{j_1},...,q_{j_\alpha}\}$. Then $s$ is reached by $(ab)^{j_1-1}$ from $(0,\{q_1,...,q_{j_\alpha-j_1+1}\})$ which is accessible by point \ref{point2}.
	\item\label{point5} If $i\in]0,m-1[$ then $s$ is reached by $a^i$ from $(0,a^i.S)$ which is accessible by one of the  previous two points.
\end{enumerate}
\end{proof}

\begin{proposition}\label{2-separation}
Any two distinct  valid states $s=(p_i,S)$ and $s'=(p_{i'},S')$ of $A$ are non-equivalent.
\end{proposition}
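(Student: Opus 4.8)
The plan is to reduce everything to a single, robust ``detection'' tool about the second component, and then treat the two ways in which two valid states can differ. First I would establish the key lemma: for every valid state $(p_l,T)$, every $r\in\{0,\dots,n-1\}$, the state $(p_l,T)\cdot a^{n-1-r}$ is accepting if and only if $q_r\in T$, independently of $l$ and of the injections triggered along the run. The proof is short: the letter $a$ acts as the $n$-cycle on the second component, so the shift by $n-1-r$ sends $q_r$, and only $q_r$, onto the accepting state $q_{n-1}$; thus $q_{n-1}$ lies in the shifted image of $T$ iff $q_r\in T$. It then remains to check that no $q_0$ injected by Definition \ref{autoConcat} can reach $q_{n-1}$: an occurrence produced at step $t\ge 1$ of the run is shifted only $(n-1-r)-t$ further times, and a direct congruence check shows $(n-1-r)-t\not\equiv n-1\pmod n$ for $1\le t\le n-1-r$, so injected states are irrelevant to acceptance of $a^{n-1-r}$.

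With this lemma the case $S\neq S'$ is immediate: choosing any $q_r\in S\oplus S'$, the word $a^{n-1-r}$ is accepted from exactly one of $s,s'$, so they are non-equivalent. The whole difficulty is therefore concentrated in the remaining case $S=S'$ with $i\neq i'$. Here I would first observe that, since both $a$ and $b$ act as \emph{permutations} on the first automaton $A_1$, the two first components stay distinct under every word and can never be merged; consequently the only mechanism able to break the equality of the second components is the asymmetric injection of $q_0$ that occurs exactly when a track crosses the transition $p_{m-2}\xrightarrow{a}p_{m-1}$ (or loops on $p_{m-1}$ under $b$).

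Assuming without loss of generality $i<i'$ and writing $d=i'-i$, the word $a^{m-1-i'}$ brings $p_{i'}$ onto $p_{m-1}$, injecting a fresh $q_0$ into that track at the final step, while it brings $p_i$ only to $p_{m-1-d}\neq p_{m-1}$ with no injection at all. The two second components thus become $T$ and $T\cup\{q_0\}$, where $T$ is the common shift of $S$; as soon as $q_0\notin T$ they differ at $q_0$, and appending $a^{n-1}$ finishes by the detection lemma.

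The hard part will be the residual subcase $q_0\in T$, i.e. when the coordinate targeted by the injection is already occupied, so the forced $q_0$ yields no net difference. My plan there is to first open a ``hole'': since the contraction $b$ deletes $q_1$ (mapping it onto $q_0$) and, after conjugation by a suitable power of $a$, can be made to delete any chosen coordinate, I would apply such a word to both tracks simultaneously, where it acts identically on the two (equal) second components, and only afterwards route an injected $q_0$, via an appropriately chosen shift, into the now-vacant coordinate of exactly one track. The delicate bookkeeping is to guarantee that this preliminary contraction does not itself inject asymmetrically (which it can when a track sits on $p_{m-1}$) and that the injection timings of the two tracks, whose first-component offset is only partially controllable once $b$ is used, place the fresh element in the hole of one track but not the other. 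Carrying out this case analysis carefully and invoking the detection lemma one last time completes the separation.
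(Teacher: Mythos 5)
Your detection lemma and your treatment of the case $S\neq S'$ are correct, and indeed more careful than the paper's one-line justification: the congruence argument showing that a $q_0$ injected at step $t\geq 1$ can never land on $q_{n-1}$ is exactly the point that needs checking. Your opening of the case $S=S'$, $i\neq i'$ is also sound: reading $a^{m-1-i'}$ makes the second components $T$ and $T\cup\{q_0\}$, which settles the subcase $q_0\notin T$. But the remaining subcase $q_0\in T$ is the actual core of the proposition — the injection of $q_0$ is the only asymmetry available, and it is precisely the coordinate that may already be occupied — and there your text is a plan, not a proof. You do not exhibit the word, and the three difficulties you yourself name are left unresolved: that opening the hole with $b$ does not wreck the alignment of the first components (recall $b$ transposes $p_0$ and $p_1$ in $A_1$), that the hole is still the preimage of $q_0$ at the moment of injection, and that the other track's own injections cannot refill it. Deferring this "delicate bookkeeping" is deferring the entire content of the case.

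For comparison, the paper settles exactly this point with one concrete word. From $(p_i,S)$ and $(p_{i'},S)$, read $a^{(m-n-i)\bmod m}$, placing the first track at $p_{(m-n)\bmod m}$; then read $bb$ — two contractions, the doubling being what leaves both first components unchanged — which removes $q_1$ from both second components; then read $a^{n-1}$. The hole at $q_1$ is exactly the preimage of $q_0$ under the shift $a^{n-1}$, so neither shifted set contains $q_0$ at the end; the first track enters $p_{m-1}$ precisely at the last step and so acquires $q_0$, while the second track, whose first component stays distinct throughout (as you observe, $a$ and $b$ act as permutations on $A_1$), can only be injected at steps $t<n-1$, whose images $q_{n-1-t}$ differ from $q_0$ by the same congruence computation as in your detection lemma. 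This produces states with distinct second components and reduces to your first case. Your sketch is pointing at this construction, but without the specific alignment $(m-n-i)\bmod m$, the double contraction $bb$, the final exponent $n-1$, and the verification of the three compatibility conditions above, the hard case — and hence the proposition — remains unproven.
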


\begin{proof}
There are two cases to consider:
\begin{itemize}
	\item If $S\neq S'$, without loss of generality, let $q_j\in S\setminus S'$. Then $a^{n-1-j}$ sends $s$ to a final state and $s'$ to a non-final one.
	\item Now, suppose $S=S'$. So $i\neq i'$. By reading the word $a^{(m-n-i)\bmod m}$, one sends $s$ to $(p_{i_1},S_1)$ with $i_1=(m-n)\bmod m$, and $s'$ to $(p_{i_1'},S_1')$ with $i_1'\neq i_1$. Then, by the word $bb$, we send $(p_{i_1},S_1)$ to $(p_{i_1},S_1\setminus\{q_1\})$ and $(p_{i_1'},S_1')$ to $(p_{i_1'},S_1'\setminus\{q_1\})$. Last, by reading the word $a^{n-1}$, we send $(p_{i_1},S_1\setminus\{q_1\})$ to $(p_{m-1},S_2)$ and $(p_{i_1'},S_1'\setminus\{q_1\})$ to $(p_{i_2'},S_2')$ with $q_0\in S_2\setminus S_2'$. So we have reduced this case to the previous one.
\end{itemize}
\end{proof}

It follows from Propositions \ref{2-accessibility} and \ref{2-separation} that:

\begin{theorem}
The couple of Brzozowski automata defined in Table \ref{two-letters witness} is a $2$-letters  witness for the catenation of two languages.
\end{theorem}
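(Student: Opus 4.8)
The plan is to deduce the theorem directly from the two propositions just proved, combined with the known upper bound for catenation. First I would recall that, by \cite{YZS94}, the state complexity of the catenation of two regular languages of respective state complexities $m$ and $n$ is at most $m2^n-2^{n-1}$, and then verify that this value is precisely the number of valid states of the automaton $A$. A valid state has the form $(p_i,S)$ with $S\subseteq\{q_0,\dots,q_{n-1}\}$, subject only to the constraint that $q_0\in S$ whenever $i=m-1$. Counting accordingly: each of the $m-1$ indices $i\neq m-1$ allows all $2^n$ subsets $S$, while the single index $i=m-1$ forces $q_0\in S$ and hence permits only $2^{n-1}$ subsets, for a total of $(m-1)2^n+2^{n-1}=m2^n-2^{n-1}$.

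Next I would invoke the two propositions. Proposition \ref{2-accessibility} guarantees that every valid state of $A$ is reachable from the initial state, so the accessible part of $A$ contains all $m2^n-2^{n-1}$ valid states; conversely, the defining properties \textit{P1}, \textit{P2}, \textit{P3} of the construction ensure that no accessible state lies outside this set, so the accessible part consists of \emph{exactly} these states. Proposition \ref{2-separation} then shows that any two distinct valid states are non-equivalent, hence none of them can be merged under the Myhill--Nerode quotient. Consequently the minimal DFA recognizing $L(A_1)\cdot L(A_2)$ has precisely $m2^n-2^{n-1}$ states.

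Finally I would combine these facts. The minimal automaton attains the upper bound $m2^n-2^{n-1}$, so $\sc(L(A_1)\cdot L(A_2))=\sc_{\cdot}(m,n)$, which is exactly the definition of a witness. Since $A_1$ and $A_2$ are Brzozowski automata over the two-letter alphabet $\{a,b\}$, and one checks directly that each is a minimal DFA of the prescribed size—so that $\sc(L(A_1))=m$ and $\sc(L(A_2))=n$—the couple $(L(A_1),L(A_2))$ is a genuine two-letter witness, as claimed.

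I expect the concluding argument itself to be pure bookkeeping; the delicate point is rather to be sure that the two feeding propositions together pin down the minimal automaton exactly. In particular, one must confirm that "$q_0\in S$ when $i=m-1$" is the \emph{only} restriction characterizing the accessible states of the subset construction, so that the count $m2^n-2^{n-1}$ matches the upper bound precisely rather than merely bounding it from below; this is what turns the construction from a lower-bound example into a tight witness.
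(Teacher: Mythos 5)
Your proposal is correct and follows essentially the same route as the paper: the paper also counts the valid states $(p_i,S)$ (with $q_0\in S$ forced when $i=m-1$) to get exactly $m2^n-2^{n-1}$, and then concludes the theorem directly from Propositions \ref{2-accessibility} and \ref{2-separation}, so that the minimal DFA attains the upper bound of \cite{YZS94}. The points you flag as needing care—that accessible states are necessarily valid, and that $A_1$, $A_2$ are themselves minimal DFAs of sizes $m$ and $n$—are exactly the bookkeeping the paper treats as immediate.
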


\subsection{The three automata case}

The triple of Brzozowski automata $A_1, A_2, A_3$ with respective size $m, n, p$ described in Figure \ref{3-letters} is the  proposed $3$-letters witness for the double catenation.
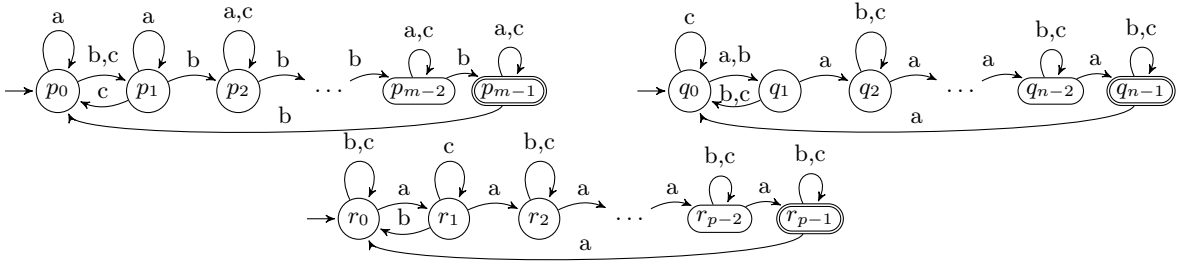
\begin{figure}[htb]
	\centerline{
		\begin{tikzpicture}[node distance=1.2cm, bend angle=25]
			\node[state,initial] (p0) {$p_0$};
			\node[state] (p1) [right of=p0] {$p_1$};
			\node[state] (p2) [right of=p1] {$p_2$};
			\node (etc1) [right of=p2] {$\ldots$};
			\node[state, rounded rectangle] (m-2) [right of=etc1] {$p_{m-2}$};
			\node[state, rounded rectangle, accepting] (m-1) [right of=m-2] {$p_{m-1}$};
			\path[->]
        (p0) edge[bend left] node {b,c} (p1)
        (p1) edge[bend left] node {b} (p2)
        (p2) edge[bend left] node {b} (etc1)
        (etc1) edge[bend left] node {b} (m-2)
        (m-2) edge[bend left] node {b} (m-1)
        (m-1) edge[out=-115, in=-65, looseness=.2] node[above] {b} (p0)
		    (p0) edge[out=115,in=65,loop] node {a} (p0)
		    (p1) edge[out=115,in=65,loop] node {a} (p1)
		    (p2) edge[out=115,in=65,loop] node {a,c} (p2)
		    (m-2) edge[out=115,in=65,loop] node {a,c} (m-2)
		    (m-1) edge[out=115,in=65,loop] node {a,c} (m-1)
        (p1) edge[bend left] node[above] {c} (p0)
			;
			\node (concon) [right of=m-1]{};
			\node[state,initial] (q0) [right of=concon]{$q_0$};
			\node[state] (q1) [right of=q0] {$q_1$};
			\node[state] (q2) [right of=q1] {$q_2$};
			\node (etc2) [right of=q2] {$\ldots$};
			\node[state, rounded rectangle] (n-2) [right of=etc2] {$q_{n-2}$};
			\node[state, rounded rectangle, accepting] (n-1) [right of=n-2] {$q_{n-1}$};
			\path[->]
        (q0) edge[bend left] node {a,b} (q1)
        (q1) edge[bend left] node {a} (q2)
        (q2) edge[bend left] node {a} (etc2)
        (etc2) edge[bend left] node {a} (n-2)
        (n-2) edge[bend left] node {a} (n-1)
        (n-1) edge[out=-115, in=-65, looseness=.2] node[above] {a} (q0)
		    (q0) edge[out=115,in=65,loop] node {c} (q0)
		    (q2) edge[out=115,in=65,loop] node {b,c} (q2)
		    (n-2) edge[out=115,in=65,loop] node {b,c} (n-2)
		    (n-1) edge[out=115,in=65,loop] node {b,c} (n-1)
        (q1) edge[bend left] node[above=-.1cm] {b,c} (q0)
			;
			\node (concon2) [below of=etc1, node distance=1.7cm]{};
			\node[state,initial] (r0) [right of=concon2, node distance=.4cm]{$r_0$};
			\node[state] (r1) [right of=r0] {$r_1$};
			\node[state] (r2) [right of=r1] {$r_2$};
			\node (etc3) [right of=r2] {$\ldots$};
			\node[state, rounded rectangle] (p-2) [right of=etc3] {$r_{p-2}$};
			\node[state, rounded rectangle, accepting] (p-1) [right of=p-2] {$r_{p-1}$};
			\path[->]
        (r0) edge[bend left] node {a} (r1)
        (r1) edge[bend left] node {a} (r2)
        (r2) edge[bend left] node {a} (etc3)
        (etc3) edge[bend left] node {a} (p-2)
        (p-2) edge[bend left] node {a} (p-1)
        (p-1) edge[out=-115, in=-65, looseness=.2] node[above] {a} (r0)
		    (r0) edge[out=115,in=65,loop] node {b,c} (r0)
		    (r1) edge[out=115,in=65,loop] node {c} (r1)
		    (r2) edge[out=115,in=65,loop] node {b,c} (r2)
		    (p-2) edge[out=115,in=65,loop] node {b,c} (p-2)
		    (p-1) edge[out=115,in=65,loop] node {b,c} (p-1)
        (r1) edge[bend left] node[above] {b} (r0)
			;
    \end{tikzpicture}
  }
  \caption{$3$-letters witness for double catenation}
  \label{3-letters}
\end{figure}

It corresponds to  Table  \ref{tableauBis} when $\alpha=3$ (see Table \ref{three-letters witness}).

\begin{table}
\centerline{\begin{tikzpicture}[scale=0.7,node distance=1cm,shorten >=0pt]   
  \draw (3,1) rectangle (6,4);
	\draw (3.5,4.5) node {$A_1$};
	\draw (4.5,4.5) node {$A_2$};
	\draw (5.5,4.5) node {$A_3$};
	\draw (2.5,3.5) node {$c$};
	\draw (2.5,2.5) node {$b$};
	\draw (2.5,1.5) node {$a$};
	\draw (3.5,3.5) node {$t$};
	\draw (4.5,3.5) node {$c$};
	\draw (5.5,3.5) node {$\mathds{1}$};
	\draw (3.5,2.5) node {$p$};
	\draw (4.5,2.5) node {$t$};
	\draw (5.5,2.5) node {$c$};
	\draw (3.5,1.5) node {$\mathds{1}$};
	\draw (4.5,1.5) node {$p$};
	\draw (5.5,1.5) node {$p$};
\end{tikzpicture}}
\caption{{$3$-letters witness for catenation of three languages}}
\label{three-letters witness}
\end{table}

The accessible states of $A$ (the DFA obtained by the subset algorithm from $A_1$, $A_2$ and $A_3$ connected as described in Definition \ref{autoConcat}) are identified to 3-tuples of the form $(p_i,S=\{q_{j_0},...,q_{j_{\beta}}\},T=\{r_{k_0},...,r_{k_{\gamma}}\})$, with $j_0<...<j_{\beta}$ and $k_0<...<k_{\gamma}$, and must satisfy the three following constraints:
\begin{itemize}
	\item $i=m-1\Rightarrow q_0\in S$.
	\item $q_{n-1}\in S\Rightarrow r_0\in T$.
	\item $S=\emptyset\Rightarrow T=\emptyset$.
\end{itemize}
These constraints corresponds to the properties \textit{P1}, \textit{P2} and \textit{P3} in the peculiar case where $\alpha=3$. So the number of states verifying these constraints (valid states) is equal to $\#\mathcal T_3$, the value computed in Example \ref{exampleJG}. We prove all these states are both accessible and pairwise non-equivalent. 
\begin{proposition}\label{3-accessibility}
Any valid state $s=(p_i,S,T)$ of $A$ is accessible.
\end{proposition}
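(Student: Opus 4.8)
The plan is to prove accessibility by a double induction that mirrors the structure already established for the $(\alpha+1)$-letters case in Proposition \ref{accessibilit�}, but now adapted to the three-automaton setting where the alphabet has been reduced to $\{a,b,c\}$ and the letter roles are read off Table \ref{three-letters witness}. First I would fix the precise actions: $a$ acts as identity on $A_1$, as the cycle $p$ on $A_2$ and on $A_3$; $b$ acts as the transposition $t$ on $A_1$, as the cycle $p$ on $A_2$; and $c$ acts as the contraction $c$ on $A_1$, the transposition $t$ on $A_2$, the contraction $c$ on $A_3$; together with the inter-automaton transitions created by Definition \ref{autoConcat} (reaching $q_{m-1}$ forces $q_0$ into $S$, reaching $q_{n-1}$ forces $r_0$ into $T$). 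The outer induction is on the size of the last component $T$, and within each size I would order states by the partial order on $T=\{r_{k_0},r_{k_1},\dots\}$ used in Proposition \ref{accessibilit�} (first by $|T|$, then by $k_0$, then by $k_1$), peeling states off by a single letter towards a strictly smaller predecessor.

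The core of the argument splits exactly as before. I would first treat $T$ a singleton $\{r_{k_0}\}$, distinguishing the cases $k_0>0$ (cycle down on $A_3$ with $a$, after first arranging by powers of $a$ on $A_2$ to put $q_{n-1}$ into $S$ so that $r_0$ is legitimately forced into $T$), and $k_0=0$ (either $q_{n-1}\in S$ already makes $s$ accessible from a state with empty $T$ via the inter-automaton transition, or else I would invoke the analogue of Corollary \ref{theCorollaire} to manufacture a predecessor carrying $\{1\}$ in the middle component and reach $s$ by a contraction word). For $|T|\geq 2$ the reduction is: if $k_0>0$ apply $a$ (the cycle on $A_3$) to decrement $k_0$; if $k_0=0$ and $k_1=1$ remove $r_0$ by a contraction-type word, using Lemma \ref{theLemme} when $q_{n-1}\in S$ but $q_{n-1}$'s presence was not yet secured; and if $k_0=0,k_1>1$ first adjoin $\{1\}$ to $T$ via Lemma \ref{theLemme}, reducing to the previous case. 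The inner layer handling the middle component $S$ is itself the $\alpha=2$ accessibility of Proposition \ref{2-accessibility}, which I would cite for states of the form $(p_i,S,\emptyset)$ and $(p_i,S,\{0\})$ serving as the base of the $T$-induction.

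The step I expect to be the main obstacle is exactly the contraction step on $A_3$: to remove $r_0$ from $T$ I must reach a predecessor in which $q_{n-1}\in S$ (so that the forced-transition from Definition \ref{autoConcat} reinjects $r_0$), yet that same predecessor must itself be a valid state and strictly smaller in the induction order. This is precisely the technical difficulty that Lemma \ref{theLemme} and Corollary \ref{theCorollaire} were engineered to resolve in the general case, so the real work is to check that their hypotheses ($0\in S_k$, $1\notin S_k$, the parity bookkeeping on $\min$-indices) translate correctly into the $\alpha=3$ notation $(p_i,S,T)$ and that the words produced stay within $\{a,b,c\}^*$. I would also verify, as in the closing paragraph of Proposition \ref{accessibilit�}, the invariant ${\cal P}$ that we never apply the $A_3$-cycle letter $a$ once a final state of $A_3$ has been reached along the chosen path, which is what guarantees that the inductively-assumed accessible predecessors genuinely exist. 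The remaining cases are routine powers of $a$ and $b$ to position the first two components and carry no real difficulty.
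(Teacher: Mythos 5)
There is a genuine gap, and it sits exactly where you locate ``the real work'': your strategy of transplanting the Section~4 machinery founders on the one feature that distinguishes the $3$-letter witness from the $(\alpha+1)$-letter one, namely that in Table~\ref{three-letters witness} no letter acts on $A_3$ (or on $A_2$) independently of the other automata. The letter $a$ cycles $A_2$ and $A_3$ \emph{simultaneously}, and the contraction of $A_3$ is the letter $b$, which at the same time cycles $A_1$ and transposes $A_2$. Consequently your induction steps are unsound as stated: ``if $k_0>0$ apply $a$ to decrement $k_0$'' yields a predecessor whose middle component is also rotated, which need not be valid (the rotation can place $q_{n-1}$ in the middle component without $r_0$ in the last one, or violate $i=m-1\Rightarrow q_0\in S$), and the word $a^{k_0}$ does not even map that predecessor to $s$, because each time the rotated middle component passes through $q_{n-1}$ the transition of Definition~\ref{autoConcat} injects a spurious $r_0$ into the last component. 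In Section~4 these cycling steps are performed with the dedicated letter $\sigma_{\alpha+1}$, which cycles the last automaton and fixes everything else; that letter has no counterpart here, and coping with its absence is precisely what the paper's proof spends all its effort on: parking the middle component on $\{q_0\},\{q_1\}$ via interleaved factors $(ac)^{\delta+p}$, parity-adjusted starting states $i'$, and words such as $aabcbcc(ab)^{\Delta-2}a^{j_0+1}$ with the $\Delta$ and $\Psi$ case distinctions. (By contrast, the part of your plan that invokes Lemma~\ref{theLemme} and Corollary~\ref{theCorollaire} for the contraction step is defensible: for $k\le 3$ their return words involve only $\sigma_1,\sigma_2$, i.e.\ $c,b$, whose actions here agree with those in Table~\ref{tableau}; but these lemmas supply only contraction predecessors, never the cycling steps, so the heart of the proof is still missing.)

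Two further concrete defects. First, you misread Table~\ref{three-letters witness}: the actual actions on $(A_1,A_2,A_3)$ are $c=(t,c,\mathds{1})$, $b=(p,t,c)$ and $a=(\mathds{1},p,p)$; in particular $b$ is the \emph{cycle} of $A_1$, not its transposition, and no letter contracts $A_1$ at all, whereas your plan has $c$ contracting $A_1$ and $b$ cycling $A_2$. Since an accessibility proof here consists entirely of exhibiting explicit words over $\{a,b,c\}$, every word built from that reading would act wrongly. Second, Proposition~\ref{2-accessibility} cannot serve as the base of your induction: it concerns the $2$-letter witness of Table~\ref{two-letters witness}, a different pair of automata over a different alphabet (there a single letter cycles both automata; in the present system no letter does), and, worse, reaching a state $(p_i,S,\emptyset)$ in the $3$-automaton system requires a path along which the middle component never meets $q_{n-1}$ --- once $r_0$ is injected, the last component can never be emptied --- a constraint that no cited accessibility statement controls. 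The paper avoids all of these transfers by proving everything in a single self-contained induction on $|S\cup T|$ with explicit words over $\{a,b,c\}$.
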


\begin{proof}
By induction over the size of $S\cup T$. First, any state of the form $(p_i,\emptyset,\emptyset)$ is accessible by $b^i$ from the initial state $(p_0,\emptyset,\emptyset)$. Next, consider some integer $\theta$ and suppose any state $(p_i,S,T)$ with $|S\cup T|\leq\theta$ is accessible. We prove, by cases, that any state $(p_i,S,T)$ with $|S\cup T|=\theta+1$ is also accessible.
\begin{enumerate}
	\item $i=m-1$
		\begin{enumerate}
			\item $S=\{q_0\}$ and $T=\emptyset$\\
						$(p_{m-2},\emptyset,\emptyset)\xrightarrow{b}s$
			\item $S=\{q_0\}$ and $T\neq\emptyset$
			
			\centerline{$\begin{array}{ll}
						(p_{m-1},\{q_0\},a^{n-1+k_0}\cdot (T\setminus\{r_{k_0}\}))&\xrightarrow a\\
						(p_{m-1},\{q_0,q_1\},a^{n-2+k_0}\cdot (T\setminus\{r_{k_0}\}))&\xrightarrow {(ac)^{n-2}}\\
						(p_{m-1},\{q_0,q_{n-1}\},a^{k_0}\cdot T)&\xrightarrow {(ac)^{k_0+p}}s\\
			\end{array}$}
						
						The suffix $(ac)^{k_0}$ is sufficient in general, except when $k_0=0$, since one last occurrence of $ac$ is 				
						necessary over $A_2$.
			\item $|S|>1$\\
			If $j_1 > 1$ then
			
			\centerline{$\begin{array}{ll}
						(p_{m-1},a^{j_1}.(S\setminus\{q_0\}),a^{j_1}\cdot T)&\xrightarrow a \\
						(p_{m-1},\{q_0\}\cup a^{j_1-1}\cdot (S\setminus\{q_0\}),a^{j_1-1}\cdot T)&\xrightarrow {ac} \\
						(p_{m-1},\{q_0\}\cup a^{j_1-2}\cdot(S\setminus\{q_0\}),a^{j_1-2}\cdot T)&\xrightarrow {(ac)^{j_1-2}} s\\
			\end{array}$}

\bigskip			
			if $j_1=1$ then 
			
			\centerline{$(p_{m-1},a\cdot (S\setminus\{q_0\}),a\cdot T)\xrightarrow {a} s$}
		\end{enumerate} 
	\item $i<m-1$
		\begin{enumerate}
			\item $S=\{q_{j_0}\}$ and $T=\emptyset$
			
			\centerline{$\begin{array}{ll}
						(p_{m-2},\emptyset,\emptyset))&\xrightarrow{b^{i+2}}\\
						(p_i,\{q_{(i+1) \bmod 2}\},\emptyset &\xrightarrow {c^2}\\
						(p_i,\{q_{j_0}\},\emptyset) &\xrightarrow {a^{j_0}} s\\
					\end{array}$}
						
			\item $S=\{q_{j_0}\}$ and $T\neq\emptyset$\\
			Let $\delta=(k_0-j_0) \bmod p$ and $i'=\left\{\begin{array}{ll}
											i &\mbox{if }i>1\mbox{ or }\delta + p\mbox{ is even}\\
											1-i &\mbox{otherwise}\\
											\end{array}
											\right.$
			
			\centerline{$\begin{array}{ll}
						(p_{i'},\{q_{n-2}\},a^{k_0+1}\cdot(T\setminus\{r_{k_0}\}))&\xrightarrow a\\ %
						(p_{i'},\{q_{n-1}\},\{r_0\} \cup a^{k_0}\cdot(T\setminus\{r_{k_0}\}))&\xrightarrow {(ac)^{\delta +p}}\\ %
						(p_i,\{q_0\},\{r_\delta\} \cup a^{(-j_0) \bmod p}\cdot(T\setminus\{r_{k_0}\}))&\xrightarrow {a^{j_0}} s\\
						\end{array}$}
												
				As previously, the factor $(ac)^p$ ensures one occurrence of $ac$ even when $(k_0-j_0)\bmod p=0$.
		  \item $|S|>1$ and $r_{j_0+1}\not\in T$
				\begin{enumerate}
					\item $j_1>j_0+1$
					
					\centerline{$\begin{array}{ll}
								(p_{m-2},a^{j_0}\cdot (S\setminus\{q_{j_0}\}),a^{j_0}\cdot T)&\xrightarrow{bb}\\
								(p_0,\{q_1\}\cup a^{j_0}\cdot (S\setminus\{q_{j_0}\}),a^{j_0}\cdot T)&\xrightarrow{b^i}\\
								(p_i,\{q_{(1+i)\bmod 2}\}\cup a^{j_0}\cdot (S\setminus\{q_{j_0}\}),a^{j_0}\cdot T)&\xrightarrow{cc}\\
								(p_i,\{q_0\}\cup a^{j_0}\cdot (S\setminus\{q_{j_0}\}),a^{j_0}\cdot T)&\xrightarrow{a^{j_0}}s
						\end{array}$}																
								
					\item $j_1=j_0+1$ (we have $q_0,q_1\in a^{j_0}\cdot S$)
					
					\centerline{$\begin{array}{ll}
								(p_{m-2},a^{j_0}\cdot (S\setminus\{q_{j_1}\}),a^{j_0}\cdot T)&\xrightarrow{bb}\\
								(p_{m-2},a^{j_0}\cdot S,a^{j_0}\cdot T)&\xrightarrow{b^i}\\
								(p_i,a^{j_0}\cdot S,a^{j_0}\cdot T)&\xrightarrow{a^{j_0}} s\\
 					\end{array}$}																

				\end{enumerate} 
			\item $|S|>1$ and $r_{j_0+1}\in T$\\
						If $S={ Q}_{A_2}$ then note that $r_0,r_1\in T$ and so: $(p_i,S,a\cdot (T\setminus\{r_0\})\xrightarrow{a}s$.\\
						If $S\neq{ Q}_{A_2}$ then we set $\Delta=min\{\delta>0|q_{j_0+\delta}\not\in S\}$. If $\Delta=1$ then we first notice that $q_{n-1}\not\in a^{j_{0}+2}\cdot S$ and $q_{n-1}\in a^{j_{0}+1}\cdot S$. Hence,
						 $(p_i,a^{j_0+2}\cdot S,a^{j_0+2}\cdot (T\setminus\{r_{j_0+1}\}))$ is valid and
						$$(p_i,a^{j_0+2}\cdot S,a^{j_0+2}\cdot (T\setminus\{r_{j_0+1}\}))\xrightarrow{a}(p_{i},a^{j_{0}+1}\cdot S,a^{j_{0}+1}\cdot T)$$
						because $r_{p-1}\not\in a^{j_{0}+2}\cdot (T\setminus\{r_{j_{0}+1}\})$. 
						Furthermore, since $j_{0}=\min\{j\mid q_{j}\in S\}$, we 
						have $$(p_{i},a^{j_{0}+1}\cdot S,a^{j_{0}+1}\cdot T)\xrightarrow{a^{j_{0}+1}}s.$$
						Now let us examine the case when $\Delta>1$ and  set $R=\{r_{j_0+2},\ldots ,r_{j_0+\Delta}\}$. 
						We consider two situations:
				\begin{enumerate}
					\item $R\cap T=\emptyset$.
						\begin{enumerate}
							\item $i\neq (\Delta-2)\bmod m$\\
							Since $q_{n-1}\in a^{j_{0}+\Delta}\cdot S$ we have
							$$(p_{i-(\Delta-1)},a^{j_0+\Delta+1}\cdot S,a^{j_0+\Delta+1}\cdot (T\setminus\{r_{j_0+1}\}))
										\xrightarrow{a}(p_{i-(\Delta-1)},a^{j_0+\Delta}\cdot S,\{r_{0}\}\cup a^{j_0+\Delta}\cdot (T\setminus\{r_{j_0+1}\})).$$
							But $q_{0}\in a^{j_{0}+\Delta-1}\cdot S$ so
							$$(p_{i-(\Delta-1)},a^{j_0+\Delta}\cdot S,\{r_{0}\}\cup a^{j_0+\Delta}\cdot (T\setminus\{r_{j_0+1}\}))
										\xrightarrow{abcc}(p_{i-(\Delta-2)},a^{j_0+\Delta-1}\cdot S,
										\{r_{0}\}\cup a^{j_0+\Delta-1}\cdot (T\setminus\{r_{j_0+1}\})).$$
							Furthermore
							$$(p_{i-(\Delta-2)},a^{j_0+\Delta-1}\cdot S,
										\{r_{0}\}\cup a^{j_0+\Delta-1}\cdot (T\setminus\{r_{j_0+1}\}))
										\xrightarrow{(ab)^{\Delta-3}}(p_{i-1},a^{j_0+2}\cdot S,
										\{r_{0}\}\cup a^{j_0+2}\cdot (T\setminus \{r_{j_{0}+1}\}).$$
							But $r_{p-1}\not\in a^{j_{0}+2}\cdot (T\setminus \{r_{j_{0}+1}\}))$ implies
							$$(p_{i-1},a^{j_0+2}\cdot S,
										\{r_{0}\}\cup a^{j_0+2}\cdot (T\setminus \{r_{j_{0}+1}\})
										\xrightarrow{ab}(p_{i},a^{j_0+1}\cdot S,
										a^{j_0+1}\cdot T).$$
							And finally,
							$$(p_{i},a^{j_0+1}\cdot S,
										a^{j_0+1}\cdot T)
										\xrightarrow{a^{j_{0}+1}}s.$$
							%
							\item $i= (\Delta-2)\bmod m$\\
							We proceed in a very similar way excepting that we start on $p_{m-2}$ on the first automaton rather than $p_{i-(\Delta -1)}=p_{m-1}$ and we use a slightly different prefix
							$aabcbcc$ instead of $aabcc$:
										$$(p_{m-2},a^{j_0+\Delta+1}\cdot S,a^{j_0+\Delta+1}\cdot (T\setminus\{r_{j_0+1}\}))
										\xrightarrow{aabcbcc(ab)^{\Delta-2}a^{j_0+1}}s.$$
						\end{enumerate}
					\item $R\cap T\neq\emptyset$.\\
						We set $\Psi=min\{\psi>0|r_{j_0+1+\psi}\in T\}$. The proof goes as in the previous case (i) but we do not need
						to contract with $c$ on $A_{2}$. We construct the good words by deleting the letters $c$ and replacing $\Delta$ by $\Psi$. So the last two cases are
						\begin{enumerate}
							\item $i\neq (\Psi-2)\bmod m$\\
										$(p_{i-(\Psi-1)},a^{j_0+\Psi+1}\cdot S,a^{j_0+\Psi+1}\cdot (T\setminus\{r_{j_0+1}\})
										\xrightarrow{aab(ab)^{\Psi-2}a^{j_0+1}}s$
							\item $i=(\Psi-2)\bmod m$\\
										$(p_{m-2},a^{j_0+\Psi+1}\cdot S,a^{j_0+\Psi+1}\cdot (T\setminus\{r_{j_0+1}\})
										\xrightarrow{aabb(ab)^{\Psi-2}a^{j_0+1}}s$
						\end{enumerate}
					\end{enumerate}
	  \end{enumerate} 
\end{enumerate}
\end{proof}

\begin{proposition}\label{3-separation}
Any two distinct valid states, $s=(p_i,S,T)$ and $s'=(p_{i'},S',T')$ of $A$ are non-equivalent.
\end{proposition}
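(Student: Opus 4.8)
The plan is to separate $s$ and $s'$ by exhibiting a word $w$ driving exactly one of them to an accepting state. Since a valid state $(p_i,S,T)$ is accepting if and only if $r_{p-1}\in T$, this amounts to producing a third component that contains $r_{p-1}$ on one side only. I would organize the argument by the \emph{outermost} component in which $s$ and $s'$ disagree, treating the third component $T$ first, then the second component $S$ (assuming $T=T'$), then the index $i$ (assuming $S=S'$ and $T=T'$); each case is reduced to the preceding one.

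The case $T\neq T'$ is the base and the clean one. Choose $r_k\in T\oplus T'$, say $r_k\in T\setminus T'$, and read $w=a^{p-1-k}$. As $a$ acts as the cycle $(r_0,\dots,r_{p-1})$ on $A_3$, the image of $r_k$ is the unique original element sent to $r_{p-1}$. The only other way the third component could acquire $r_{p-1}$ is a feed-forward of $r_0$, triggered whenever the $A_2$-component enters $q_{n-1}$; but such an $r_0$ is deposited at some intermediate step and afterwards rotated strictly fewer than $p$ times, hence can never land on $r_{p-1}$. Consequently $r_{p-1}\in(s\cdot w)_T$ and $r_{p-1}\notin(s'\cdot w)_T$, so $s$ and $s'$ are separated (the case $p=1$ being trivial).

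The case $T=T'$, $S\neq S'$ is reduced to the base case by manufacturing a difference in the third component. Let $q_j\in S\oplus S'$ be of maximal index, say $q_j\in S\setminus S'$, and read $a^{n-1-j}$, which drives $q_j$ to $q_{n-1}$. By maximality of $j$ the two second components trigger exactly the same feed-forwards at every step but the last, where only the $s$-side enters $q_{n-1}$ and therefore deposits an $r_0$. If the third component common to both sides just before this last letter omits $r_{p-1}$, the deposit yields $r_0$ on the $s$-side only, so $T\neq T'$ afterwards and the base case applies. The delicate point — and the \textbf{main obstacle} of the whole proposition — is \emph{masking}: if that common third component already contains $r_{p-1}$ (equivalently, if $r_0$ is already present after the last rotation), the freshly deposited $r_0$ is invisible and no difference is created. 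I would resolve this by first vacating the landing slot, rotating $A_3$ so that position $0$ is free at deposit time, and, in the degenerate situation where the third component is the \emph{full} set, reading a single $b$ first (which contracts $A_3$, losing $r_1$, while acting as bijections on $A_1$ and $A_2$ and hence preserving the $S$-difference). The coupling of $a$ with the cycle on $A_2$ is absorbed by recomputing the exponent that brings the distinguishing element to $q_{n-1}$.

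The case $S=S'$, $T=T'$, $i\neq i'$ is reduced to the previous one using the letter $b$, which cycles $A_1$. Taking the side, say $s$, whose index reaches $p_{m-1}$ first, the word $b^{m-1-i}$ drives that index to the final state $p_{m-1}$ while the other side is still short of it, so only $s$ deposits $q_0$ into its second component; since $b$ acts as bijections on $A_1$ and $A_2$ and identically on both sides up to that deposit, no spurious coincidence occurs. Here too masking can intervene — validity forces $q_0\in S$ as soon as $p_{m-1}$ is reached — which I would circumvent by parking the $s$-side at $p_{m-1}$ and reading $a$: the $s$-side refreshes $q_0$ at each step whereas the other side does not, so after enough cycles of $A_2$ an asymmetry in $S$ must surface (the only escape being a full second component, again removed by a preliminary contraction). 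This produces $S\neq S'$ and hands the argument back to the previous case. Chaining the three reductions separates any two distinct valid states, which is the claim.
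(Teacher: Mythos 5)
Your base case $T\neq T'$ is correct (and argued more carefully than the paper's one-line version, since you track the feed-forward deposits), and your third case is a genuinely different, workable idea compared with the paper's word $b^{m-1-i}ca^{n-1}$: parking the $s$-side on $p_{m-1}$ and pumping $a$ so that the deposited $q_0$'s sweep positions $q_0,q_1,\dots$ and must expose a hole of the common second component within $n$ steps. The proof breaks, however, in the middle case ($T=T'$, $S\neq S'$), exactly at the masking obstacle you yourself single out. Your fix is to free the landing slot by \emph{rotating} $A_3$, reserving the contraction $b$ for the degenerate situation where $T$ is the full set. But $a$ is the only letter that rotates $A_3$, and it rotates $A_2$ in lock-step: a deposit caused by the distinguishing element can only occur at instants $t\equiv n-1-j\pmod{n}$, and at such an instant the cell of $A_3$ sitting under $r_0$ is the one that started at position $-t\bmod p$, which ranges only over a single coset of the subgroup generated by $n$ in $\mathbb{Z}/p\mathbb{Z}$. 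When $\gcd(n,p)>1$ this coset misses positions, so ``recomputing the exponent'' cannot free the slot, no matter how long you rotate.

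Concretely, take $n=4$, $p=2$, $m\geq 2$, and the valid states $s=(p_0,\{q_1\},\{r_0\})$, $s'=(p_0,\{q_3\},\{r_0\})$. Under any $a^k$ both first components stay at $p_0$, the single tokens of the second components enter $q_3$ only at even instants ($k\equiv 2$ and $k\equiv 0 \pmod 4$ respectively), and at every even instant the third component, which rotates as $\{r_{k\bmod 2}\}$, is exactly $\{r_0\}$; hence every deposit is masked and $s\cdot a^k$, $s'\cdot a^k$ have identical third components for all $k$, so they are simultaneously final or non-final. Since $T=\{r_0\}$ is not full, your procedure never reads $b$ and therefore never separates this pair. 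The repair requires using the contraction \emph{unconditionally}, which is what the paper does: after the preparatory words $b^{m-i'}$ and $a^{t}$ (with the two-case definition of $t$) have placed the distinguishing element so that it will enter $q_{n-1}$ exactly $p-1$ letters later, one letter $b$ empties cell $r_1$ in \emph{both} third components at once (this is the point of $r_1\notin T_3\cup T_3'$ in the paper's proof), and then $a^{p-1}$ brings that empty cell to $r_0$ precisely at the step of the distinguishing deposit, all intermediate deposits landing elsewhere. Because your third case reduces to the second, this gap propagates and breaks the whole proposition.
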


\begin{proof}
There are three cases to consider:
\begin{itemize}
	\item If $T\neq T'$, without loss of generality, let $r_k\in T\setminus T'$. Then $a^{p-1-k}$ sends $s$ to a final state and $s'$ to a non-final one.
	\item Now, suppose $T=T'$ and $S\neq S'$. Without loss of generality, let $q_j\in S\setminus S'$. First, read the word $a^{n-1-j}$. It sends $s$ to $s_1=(p_i,S_1,T_1)$ and $s'$ to $s_1'=(p_{i'},S_1',T_1')$ with $q_{n-1}\in S_1\setminus S_1'$. Then, read the word $b^{m-i'}$ to send $s_1$ to $s_2=(p_{i_2},S_2,T_2)$ and $s_1'$ to $s_2'=(p_0,S_2',T_2')$ with $q_{n-1}\in S_2\setminus S_2'$. Now, we set
\[ t=\left\{
\begin{array}{ll}
	(n-p+1)\bmod{n} & \mbox{ if }(n-p)\bmod{n}\neq 0,1\\
	2-(n-p)         & \mbox{ otherwise}
\end{array}
\right.\]
By reading the word $a^tb$, one sends $s_2$ to $s_3=(p_{i_3},S_3,T_3)$ and $s'$ to $s_3'=(p_1,S_3',T_3')$ with $q_{n-p}\in S_3\setminus S_3'$ and $r_1\not\in T_3\cup T_3'$. Last, by reading the word $a^{p-1}$, we send $s_3$ to $(p_{i_3},S_4,T_4)$ and $s_3'$ to $(p_1,S_4',T_4')$ with $r_0\in T_4\setminus T_4'$. So we have reduced this case to the previous one.
	\item Last, if $T=T'$, $S=S'$ and $i\neq i'$, without loss of generality, let us suppose  that $i>i'$. Then by reading the word $b^{m-1-i}ca^{n-1}$, we send $s$ to $(p_{m-1},S_1,T_1)$ and $s'$ to $(p_{i'},S_1',T_1')$ with $q_0\in S_1\setminus S_1'$. That is we have reduced this case to one of the two previous ones.
\end{itemize}
\end{proof}

It follows from Propositions \ref{3-accessibility} and \ref{3-separation} that:

\begin{theorem}
The triple of Brzozowski automata $A_1, A_2, A_3$ defined in Table \ref{three-letters witness} is a $3$-letters  witness for the catenation of three languages.
\end{theorem}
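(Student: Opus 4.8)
The plan is to obtain the witness property by combining Propositions~\ref{3-accessibility} and~\ref{3-separation}, once two preliminary points are secured: that $A_1,A_2,A_3$ are minimal DFAs of the prescribed sizes, and that the number of valid states of the constructed DFA coincides with the upper bound of Section~3, so that attaining it simultaneously forces an equality and certifies a witness.

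First I would check that $A_1$, $A_2$, $A_3$ are minimal DFAs over the three-letter alphabet $\{a,b,c\}$ of respective sizes $m$, $n$, $p$. In each automaton one letter acts as the full cycle $(0,\ldots,\mathrm{size}-1)$, which makes every state reachable from the initial state; and since each has a single final state, two distinct states are separated by a suitable power of that cycle letter, so each component is minimal. Thus the triple lies in $\mathcal{L}_m\times\mathcal{L}_n\times\mathcal{L}_p$. Next I would recall from Section~3 the construction of the DFA $A$: applying Definition~\ref{autoConcat} twice yields an NFA recognizing $L(A_1)\cdot L(A_2)\cdot L(A_3)$, and the subset construction produces $A$, whose accessible states are among the valid $3$-tuples $(p_i,S,T)$ satisfying \textit{P1}, \textit{P2}, \textit{P3}. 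Their number is $\#\mathcal{T}_3$, the value of Example~\ref{exampleJG}, which Section~3 exhibits as an upper bound for the state complexity of triple catenation with component sizes $m,n,p$. Hence it suffices to prove that this particular triple attains $\#\mathcal{T}_3$, since reaching the upper bound forces the equality $\sc_{\otimes}(m,n,p)=\#\mathcal{T}_3$ and at the same time certifies the triple as a witness.

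That lower bound is exactly what the two propositions supply. Proposition~\ref{3-accessibility} shows every valid state is accessible in $A$, so the accessible part of $A$ has precisely $\#\mathcal{T}_3$ states; Proposition~\ref{3-separation} shows any two distinct valid states are non-equivalent, so the Myhill--Nerode collapse $A/\!\!\sim$ merges none of them. Consequently the minimal DFA of $L(A_1)\cdot L(A_2)\cdot L(A_3)$ has $\#\mathcal{T}_3$ states, matching $\sc_{\otimes}(m,n,p)$, and the triple is a witness, now realized over only three letters.

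The hard part is not this final assembly, which is immediate, but the accessibility argument of Proposition~\ref{3-accessibility} on which it rests. With only three letters, the single permutation letter $a$ must simultaneously drive the cycle on $A_3$, the cycle on $A_2$, and --- through property \textit{P3} --- the injection of the successive initial states, so the words reaching a prescribed $(p_i,S,T)$ have to be engineered to avoid premature passages through final states and spurious contractions; this is precisely the source of the long case analysis on $i$, $S$ and $T$.
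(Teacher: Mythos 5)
Your proposal is correct and follows the paper's own route exactly: the theorem is obtained by combining Proposition~\ref{3-accessibility} (every valid state is accessible) with Proposition~\ref{3-separation} (distinct valid states are non-equivalent), so the minimal DFA has $\#\mathcal{T}_3$ states, matching the upper bound of Section~3. The extra details you spell out (minimality of each $A_i$ via its cycle letter and single final state, and the identification of the valid-state count with $\#\mathcal{T}_3$ from Example~\ref{exampleJG}) are left implicit in the paper but are accurate.
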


\section{Conclusion}

In this paper we have dramatically reduced the size of the alphabet needed to produce a family of witnesses for multiple catenation: $(\alpha+1)$-letters alphabet witness for catenation of $\alpha$ languages. We obtain this result by using Brzozowski DFAs, giving some new evidence of the fact that these tools seems a very good starting point to discover witnesses.
We also give a simple recursive formulae for the bound. Its effective computation gives rise to a combinatorial expression involving compositions which is an efficient alternative to the formulae given by Gao and Yu \cite{GY09} in the optimal case where automata have only one final state. \\

It remains, at least, two open problems:
\begin{enumerate} 
\item The proof of the conjecture  given in the last section where a $\alpha$-letters alphabet witnesses is given for catenation of $\alpha$ languages, but only validated  for $\alpha=2,3$.
\item The optimality of the size of the alphabet. Clearly, it is true when $\alpha=2$ but is it still true for greater values ?
\end{enumerate}

\bibliography{../COMMONTOOLS/biblio,../COMMONTOOLS/bibjg}

\end{document}